  \DeclareFontShape{T1}{lmr}{m}{scit}{<->ssub*lmr/m/scsl}{}%
\newtheorem{lemma}{Lemma}[section]
\newtheorem{theorem}{Theorem}[section]
\newtheorem{claim}{Claim}[section]
\theoremstyle{definition}
\newtheorem{definition}{Definition}[section]
\newcommand{\RomanNumeralCaps}[1]{\MakeUppercase{\romannumeral #1}}
\newcommand{\rone}{\text{\RomanNumeralCaps{1}}}
\newcommand{\rtwo}{\text{\RomanNumeralCaps{2}}}
\newcommand*\samethanks[1][\value{footnote}]{\footnotemark[#1]}
\definecolor{WildStrawberry}{RGB}{255,67,164}
\definecolor{orc}{RGB}{255,67,45}
\newcommand{\be}{\begin{equation}}
\newcommand{\ee}{\end{equation}}
\newcommand{\beq}{\begin{equation*}}
\newcommand{\eeq}{\end{equation*}}
\newcommand{\minimize}{\mathop{\rm minimize}}
\newcommand{\AutoAdjust}[3]{\mathchoice{ \left #1 #2  \right #3}{#1 #2 #3}{#1 #2 #3}{#1 #2 #3} }
\newcommand{\Xcomment}[1]{{}}
\newcommand{\InParentheses}[1]{\AutoAdjust{(}{#1}{)}}
\newcommand{\InBrackets}[1]{\AutoAdjust{[}{#1}{]}}
\newcommand{\Ex}[2][]{\operatorname{\mathbf E}_{#1}\InBrackets{#2}}
\newcommand{\Prx}[2][]{\operatorname{\mathbf{Pr}}_{#1}\InBrackets{#2}}
\newcommand{\eqdef}{\overset{\mathrm{def}}{=\mathrel{\mkern-3mu}=}}
\newcommand{\vect}[1]{\ensuremath{\mathbf{#1}}}
\def\prob{\Prx}
\def\Pr{\Prx}
\def\E{\Ex}
\newcommand{\dd}{\mathrm{d}}
\newcommand{\vecp}{\vect{p}}
\newcommand{\vecw}{\vect{w}}
\newcommand{\alg}{\textsf{ALG}}
\newcommand{\opt}{\textsf{OPT}}
\renewcommand{\emptyset}{\varnothing}
\newcommand{\order}{\sigma}
\newcommand{\match}[2]{M_{#1}{\InParentheses{#2}}}
\newcommand{\unmatch}[2]{\overline{M_{#1}}{\InParentheses{#2}}}
\newcommand{\free}[2]{Q_{#1}\InParentheses{#2}}
\newcommand{\firste}[2]{F_{#1}\InParentheses{#2}}
\newcommand{\event}{\mathcal{E}}
\newcommand{\jj}{{\overline{i}}}
\newcommand{\jp}{{\underline{i}}}
\title{Online Stochastic Matching with Edge Arrivals}
\date{}
\author{Nick Gravin\thanks{ITCS
		, Shanghai University of Finance and Economics. Email: \texttt{\{nikolai,tang.zhihao\}@mail.shufe.edu.cn}.} \and Zhihao Gavin Tang\samethanks[1] \and Kangning Wang\thanks{Department of Computer Science, Duke University. Email: \texttt{knwang@cs.duke.edu}. This work was done while the author was visiting ITCS
		, Shanghai University of Finance and Economics.}}
\begin{document}

\thispagestyle{empty}
\maketitle

\pagenumbering{gobble}
\begin{abstract}
Online bipartite matching with edge arrivals remained a major open question for a long time until a recent negative result by [Gamlath et al. FOCS 2019], who showed that no online policy is better than the straightforward greedy algorithm, i.e., no online algorithm has a worst-case competitive ratio better than $0.5$. In this work, we consider the bipartite matching problem with edge arrivals in a natural stochastic framework, i.e., Bayesian setting where each edge of the graph is independently realized according to a known probability distribution. 

We focus on a natural class of prune \& greedy online policies motivated by practical considerations from a multitude of 
online matching platforms. Any prune \& greedy algorithm consists of two stages: first, it decreases the probabilities of some edges in the stochastic instance and then runs greedy algorithm on the pruned graph. 
We propose prune \& greedy algorithms that are $0.552$-competitive on the instances that can be pruned to a $2$-regular stochastic bipartite graph, and $0.503$-competitive on arbitrary bipartite graphs. The algorithms and our analysis significantly deviate from the prior work. We first obtain analytically manageable lower bound on the size of the matching, which leads to a non-linear optimization problem. We further reduce this problem to a continuous optimization with a constant number of parameters that can be solved using standard software tools.
%
\end{abstract}

\newpage
\pagenumbering{arabic}
\setcounter{page}{1}

\section{Introduction}
Matching theory is a central area in combinatorial optimization with a big range of applications~\cite{LP_book}. Many market models for jobs, commercial products, dating, healthcare, etc., rely on matching as a fundamental mathematical primitive. These examples often aim to describe environments that evolve in real time and thus are relevant to the area of online bipartite matching initiated by a seminal paper of Karp, Vazirani and Vazirani~\cite{stoc/KarpVV90}.  
%
In this work Karp et al. consider the one-sided vertex-arrival model within the competitive analysis framework, i.e., vertices only on one side of a bipartite graph appear online and each new vertex reveals all its incident edges. The algorithm immediately and irrevocably decides to which vertex (if any) the new arrival is matched. They studied the worst-case performance of online algorithms and solved the problem optimally with an elegant $(1-1/e)$-competitive  algorithm, named \textsc{Ranking}. Later, the proof of the result has been simplified by a series of papers~\cite{sigact/BenjaminC08, soda/GoelM08, soda/DevanurJK13}.

The interest in matching models and online bipartite matching problems in particular has been on the rise since a decade 
ago due to emergence of the internet advertisement industry and online market platforms~\cite{fttcs/Mehta13}. 
With the large amount of available data on many online platforms from the day-to-day user activities, more recent literature has shifted more towards \emph{stochastic} models, also called Bayesian in the economically oriented work.
In particular, Feldman et al.~\cite{focs/FeldmanMMM09} proposed a stochastic model in which online vertices are drawn i.i.d. from a known distribution and improved\footnote{Their result holds 
under the assumption that the expected number of vertices for each type is an integer.} the competitive ratio of the classic result by Karp et al. to $0.67$. The competitive ratio has been further improved 
by a series of papers~\cite{esa/BahmaniK10, mor/ManshadiGS12, mor/JailletL14} to $0.706$. Another line of work~\cite{stoc/KarandeMT11, stoc/MahdianY11} studied the model in which online vertices arrive in a random order and showed that the \textsc{Ranking} algorithm is $0.696$-competitive.

The aforementioned results and other works, e.g.,~\cite{jacm/MehtaSVV07, esa/BuchbinderJN07, stoc/DevanurJ12, icalp/WangW15, corr/GamlathKMSW19, stoc/HKTWZZ18, soda/HPTTWZ19, ec/AshlagiBDJSS19}, have made remarkable progress on different online matching settings
with vertex arrivals, i.e., models where all incident edges of a new vertex are reported to the algorithm. However, more general arrival models are much less understood. E.g., one of the most natural and nonrestrictive extensions of online bipartite matching to the model where edges appear online and must be immediately matched or discarded was not known to have a competitive ratio better than the greedy algorithm for a long time. Only a recent negative result by Gamlath et al.~\cite{corr/GamlathKMSW19} closed this tantalizing question showing that no online algorithm can be better than $0.5$-competitive in the worst case.  
Algorithms with better performance are only known for quite special family of graphs, e.g., bounded-degree graphs~\cite{algorithmica/BuchbinderST19} and forests~\cite{fttcs/Mehta13, algorithmica/BuchbinderST19}, or under strong assumptions on the edge arrival order, e.g., random arrival order~\cite{ipco/GuruganeshS17}.

It might seem that the edge-arrival model is too general to allow non-trivial theoretical results without strong assumptions on the instance. Thus it is not very surprising that practically motivated models do not usually consider online setting with edge arrivals. On the other hand, most of the specific applications posses additional structure and extra information that might allow to break the theoretical barrier. The edge-arrival online model besides pure theoretical interest and clean mathematical formulation, is indeed relevant to practical problems not unlike the examples we discuss below.

\paragraph{Practical Motivation: Edge Arrivals.}
Imagine any online matching platform for job search, property market, or even online dating. All these instances can be viewed as  online matching processes in bipartite graphs. They also share a common trait that the realization of any particular edge is not instantaneous, often consumes significant effort and time from one or both sides of the potential match, and may exhibit complex concurrent behavior across different parties of the market. 
The platform can be thought of as an online matching algorithm, if it has any degree of control to intervene in the process of edge formation at any point.\footnote{Even if the platform cannot directly prohibit an edge formation or disallow certain matches, it usually can affect outcome indirectly by restricting access/information exchange between certain pairs of agents, so that they never consider each other as potential matches.} However, the platform does not have enough power to control the order in which edges are realized. Hence, using arbitrary edge arrival order seems to be an appropriate modeling choice in these situations.

Another notable feature of these instances is the vast amount of historical data accumulated over time. The data enables the platform to estimate the probability of a potential match between any pair of given agents. Thus the Bayesian (stochastic) approach widely adapted in economics seems to be another reasonable modeling choice. This raises the following natural question that to the best of our knowledge has not been considered before: 
\begin{quote}
Is there an online matching algorithm for stochastic bipartite graphs with edge arrivals that is better than greedy?  
\end{quote}
This question is the main focus of our work. Let us first specify the model in more details.  
\paragraph{Our Model: Edge Arrivals in Stochastic Graphs.} We call our model \textit{online stochastic matching with edge arrivals}. It is a relaxation of the standard edge-arrival model that performs on a random bipartite graph. In particular, we assume the input graph $G$ is stochastic. That is, each edge $e$ exists (is realized) in $G$ independently with probability $p_e$ and the probabilities $(p_e)_{e\in E(G)}$ are known to the online algorithm.\footnote{Note that some independence assumption across the edges is necessary. 
If we allow arbitrary probability distribution over the sets of realized edges,
the model would be as difficult as the worst case online setting.} The algorithm observes a sequence of edges arriving online in a certain (unknown) order. Upon the arrival of an edge $e$, we observe the realization of $e$ and if $e$ exists, then the algorithm immediately and irrevocably decides whether to add $e$ to the matching. We assume that the arrival order of the edges is chosen by an \emph{oblivious} adversary, i.e., an adversary who does not observe the realization of the edges and algorithm's decisions, which is a standard assumption in the literature on online algorithms in stochastic settings (see, e.g.,~\cite{geb/KleinbergW19}).
We compare the expected performance of our algorithm with the maximum matching in hindsight, i.e., the expected size of a maximum matching over the randomness of all edges.


\subsection{Comparison with Other Stochastic Models}
Our model is closely connected to two existing theoretical lines of works on stochastic bipartite matching and prophet inequality in algorithmic game theory. Below we compare our model with the most relevant results in each of these lines of works.

\paragraph{Stochastic Probing Model.} It has the same ingredient as our model: the underlying stochastic graph. That is, the input is also a bipartite graph with the stochastic information on existence probability of every edge $e$. On the other hand, it is an offline model under the query-commit framework, i.e., the algorithm can check the existence of the edges in any order. However, if an edge exists, it has to be included into the solution. 
For this model, an adaptation of the \textsc{Ranking} algorithm by Karp et al. is $(1-1/e)$-competitive.
Costello et al.~\cite{icalp/CostelloTT12} provided a $0.573$-approximation algorithm on general (non-bipartite) graphs and showed that no algorithm can have an approximation ratio larger than $0.898$. Recently, Gamlath et al.~\cite{soda/GamlathKS19} designed a $(1-1/e)$-approximation algorithm for the weighted version of this problem. 

\paragraph{Prophet Inequality for Bipartite Matching.} Consider a bipartite graph, where all edges have random values independently sampled from given probability distributions.  
Upon the arrival of an edge, we see the realization of its value and decide immediately whether to include this edge if possible in the matching. This model was originally proposed by Kleinberg and Weinberg~\cite{geb/KleinbergW19} for a more general setting of intersection of $k$ matroids. Gravin and Wang~\cite{ec/GravinW19} studied explicitly the setting of bipartite matching and provided a $\frac{1}{3}$-approximation. Our model can be viewed as an unweighted version of this prophet setting. Indeed, we assume that each edge has value either $0$ or $1$ and, hence, the probability distribution is a product of Bernoulli random variables summarized by existence probabilities $(p_e)_{e\in E(G)}$. Note that the weighted case is strictly harder than the unweighted one. Gravin and Wang~\cite{ec/GravinW19} provided a $1/2.25$ hardness result for the weighted setting while our goal is to design an online algorithm with a competitive ratio strictly better than $1/2$. After all, the simple greedy algorithm achieves a competitive ratio of $1/2$ for unweighted graphs.

\subsection{Our Results and Techniques}
We study a specific family of algorithms, named \emph{Prune \& Greedy}. The algorithm consists of two steps: (i) prune the graph by removing or decreasing probabilities of certain edges in $G$; (ii) greedily take every edge in the pruned instance. In particular, upon the arrival of an edge, we always drop it with certain probability so that its realization probability is consistent with the pruned graph.

We argue that the family of \emph{Prune \& Greedy} algorithms is of independent interest due to their practical relevance. Indeed, in those market applications we discussed above, the online platform often cannot prevent the matching between two parties (pair of vertices) once they realized their compatibility. But the platform usually possesses all the stochastic information about the graph and thus is fully capable of implementing pruning step by restricting information to its users. After that participants  naturally implement greedy matching by exploring compatibilities with the other side of the graph exposed to them by the platform in an arbitrary order.


As our first result, we identify a class of graphs on which greedy algorithm performs better than the worst-case competitive ratio of $1/2$. We compare the size of the matching to the total number of vertices, a stronger benchmark than the expected size of maximum matching. As the pruning step naturally decreases the expected size of the maximum matching, the change of the benchmark is indeed necessary.   
Specifically, we find that on log-normalized\footnote{Informally, a log-normalized $c$-regular graph is a $c$-regular graph where all edges have weights $\varepsilon \approx 0$. The formal definition is given in Section~\ref{sec:warm_up}.} $c$-regular graphs with small $c=2$ the greedy algorithm matches at least $0.552$ vertices. This result immediately implies that if initial stochastic graph has a $2$-regular bipartite spanning subgraph, then  \emph{Prune \& Greedy} algorithm is $0.552$-competitive.

Second, we propose a $0.503$-competitive \emph{Prune \& Greedy} algorithm for any bipartite stochastic graph. This result confirms that the edge-arrival model is theoretically interesting in the stochastic framework. A complementary hardness result shows that no online algorithm can be better than $2/3$-competitive.


\paragraph{Our techniques.}
We first build some intuition by analyzing the greedy algorithm on log-normalized $c$-regular graphs. One of the main challenges is that different event such as ``edge $e$ is matched'', or ``vertex $u$ is matched'' may have complex dependencies. This makes it very difficult write the performance of the greedy algorithm in an explicit analytical form.
We consider simpler to analyze events: ``there exists a vertex $u$ whose first realized edge is the edge $(uv)$'', which guarantee that vertex $v$ is matched at the end of the algorithm. This relaxation allows us to break the analysis into independent optimization problems per each vertex. We derive a guarantee $f(c)$ on the fraction of vertices matched by the greedy algorithm for any $c$-regular stochastic graph, where the function $f(c)$ has a single peak around $c=2$ with $f(2)\approx 0.532$. I.e., we develop an analytically tractable relaxation on the performance of greedy that we later generalized to non-regular case. Interestingly, the greedy algorithm may perform worse on log-normalized $c$-regular for larger $c$. In particular, greedy is not better than $0.5$-competitive on $c$-regular graphs as $c\to\infty$. 

However, this relaxation alone is not sufficient for the general case of non-regular graphs, since such analysis is not tailored in any way to the expected size of optimal matching. To this end, we consider an LP relaxation (an upper bound) on the expected optimal matching in stochastic graphs proposed in~\cite{soda/GamlathKS19}. This LP gives a set of values $(x_e)_{e\in E(G)}$ with the objective $\sum_{e\in E(G)} x_e$ which satisfy a set of constraints that could be conveniently added to our optimization problem. Our analysis for the regular graphs prompted us to the strategy of pruning each edge of the graph to $2\cdot x_e$ so that the pruned graph is similar to a $2$-regular graph. Unfortunately, this might not be a feasible operation when $p_e$ (realization probability of $e$) is smaller than $2\cdot x_e$. For these edges, it is then natural to keep their original existence probability. Our analysis can be similarly localized to an optimization problem for individual vertices, albeit the optimization becomes more complex. The main technical challenge is to solve an unwieldy optimization problem due to the ``irregular'' edges. Note that even a simpler optimization problem for $c$-regular graphs has a continuous optimal solution (i.e., is a limit of increasing discrete instances), which required computer assisted calculations to obtain the bound. 

Finally, building on top of the relaxation we discussed above, we provide a more refined analysis for the case of $2$-regular graphs. Namely, we consider a second order events that also witness the matching status of a vertex. We prove that the greedy algorithm is at least $0.552$-competitive on $2$-regular graphs, improving on the easier $f(2)\approx 0.532$ bound. We note that the same approach could in principle be extended to general \emph{Prune \& Greedy} algorithm for arbitrary graphs with optimization part still localizable to individual vertices. However, the optimization problem becomes too complicated to solve analytically. We leave it as an interesting open question to have a better analysis of the \emph{Prune \& Greedy} algorithms. On the positive side, the improved analysis for $2$-regular graphs suggests that performance of \emph{Prune \& Greedy} algorithms should be noticeably better than what we proved in this paper.

\subsection{Other Related Works}
The edge-arrival setting is also studied under the free-disposal assumption, i.e., the algorithm is able to dispose of previously accepted edges. McGregor~\cite{approx/McGregor05} gave a deterministic $\frac{1}{3+2\sqrt{2}} \approx 0.171$-competitive algorithm for weighted graphs. Varadaraja~\cite{icalp/Varadaraja11} proved the optimality of this result among deterministic algorithms. Later, Epstein et al.~\cite{stacs/EpsteinLSW13} gave a $\frac{1}{5.356} \approx 0.186$-competitive randomized algorithm and proved a hardness result of $\frac{1}{1+\ln 2} \approx 0.591$ for unweighted graphs. Recently, the bound is improved to $2-\sqrt{2} \approx 0.585$ by Huang et al.~\cite{soda/HPTTWZ19}. We remark that the question of designing an algorithm that beats $0.5$-competitive remains open.

One of the earlier work on stochastic matching is due to Chen et al.~\cite{icalp/ChenIKMR09}. They proposed stochastic model with edge probing motivated by real life matching applications such as kidney exchange. This model is more complex than the stochastic probing model we discussed before, since it has an additional constraint per each vertex $v$ on how many times edges incident to $v$ can be queried. Another difference is that a weaker benchmark than the optimal offline matching has to be used in this setting.
Chen et al. developed a $\frac{1}{4}$-approximation algorithm. Bansal et al.~\cite{algorithmica/BansalGLMNR12} considered the weighted version and provided a $\frac{1}{3}$-approximation and a $\frac{1}{4}$-approximation for bipartite graphs and general graphs respectively. The ratio for general graphs was further improved to $\frac{1}{3.709}$ by Adamczyk et al.~\cite{esa/AdamczykGM15}, and then to $\frac{1}{3.224}$ by Baveja et al. \cite{algorithmica/BavejaCNSX18}.

\section{Preliminaries}

The bipartite graph $G = (L, R, E)$ consists of left and right sides denoted respectively $L$ and $R$. The graph $G$ is a multigraph, i.e., $E$ is a multiset that may have multiple parallel edges between the same pair of vertices. We use $E_v$ to denote the multiset of edges incident to the vertex $v$ and $E_{uv}$ to denote the multiset of edges connecting $u$ and $v$. We consider the Bayesian model, where each edge $e \in E$ is realized with probability $p_e \in [0,1]$, which is known in advance. The realizations of different edges are independent. We are interested in online matching algorithms with the objective of maximizing the expected size of the matching.
We assume that all edges in $E$ arrive one by one according to some fixed unknown order (i.e., oblivious adversarial order). Upon arrival of the edge $e$, the algorithm observes whether or not $e$ is realized. If the edge exists, the algorithm immediately and irrevocably decides whether to include $e$ into the matching; the algorithm does nothing, if the edge is not realized.
We compare the performance of the algorithm with the performance of the optimal offline algorithm, also known as the \emph{prophet}, who knows the realization of the whole graph in hindsight, i.e., $\opt = \E{\text{size of maximum matching}}$.

A natural online matching strategy is the greedy algorithm: Take every available edge $e=(u, v)$ whenever both vertices $u$ and $v$ have not yet been matched. Obviously, the greedy algorithm is a $0.5$-approximation, since it selects a {\em maximal} matching in all possible realizations of the graph, which is always a $0.5$-approximation to the {\em maximum} matching. 

\paragraph{Paper Roadmap.} In Section~\ref{sec:warm_up}, we introduce the notion of stochastic regular graphs and establish an analytical bound on the competitive ratio of the greedy algorithm on $c$-regular graphs. In Section~\ref{sec:general}, we design a Prune \& Greedy algorithm that is $0.503$-competitive for general inputs.  Section~\ref{sec:improved} provides a more refined analysis of the greedy algorithm on $2$-regular graphs.
Finally, in Section~\ref{sec:hardness}, we give a simple impossibility result showing that no online algorithm can do better than $\frac{2}{3}$ of the expected optimum.

\section{Warm-up: Regular Graphs}
\label{sec:warm_up}

A regular graph is a graph whose vertices have the same degree. But how do we define vertex degrees in a stochastic graph?
One standard way is to use the expected vertex degree, i.e., $\sum_{e\in E_u}p_e$ for the degree of a vertex $u\in L$. However, the expectation alone does not contain all the important information about a degree distribution. Consider for example a vertex $a$ having only one incident edge $(a,b)$ with $p_{(a,b)}=1$ and a vertex $u$ having 2 incident parallel edges $e=(u,v)$ with probability $p_e=0.5$ for each $e\in E_u$. Both vertices $a$ and $u$ have the same expected degree, but while $a$ always has exactly one incident edge, $u$ gets no incident edges with $0.25$ probability. On the other hand, $u$ may have 2 incident edges in some realizations, which is almost the same for our purposes as having only a single incident edge. 

A good way to reconcile this difference is to substitute each edge $(u,v)$ by multiple parallel edges $e_i=(u,v)$ with small probabilities such that $p_{(u,v)}$ matches the probability that at least one of $e_i$ edges exists. Alternatively, we can define a 
\emph{log-normalized weight} for each edge $e$ as $w_e \eqdef -\ln (1-p_e)$, i.e., given an input instance $G=(L,R,E)$ we construct a one-to-one correspondence between vectors of probabilities $\vecp=(p_e)_{e \in E}$ and vectors of log-normalized weights $\vecw=(w_e)_{e\in E}$.
In particular, if we split an edge with log-normalized weight $w_e$ into two edges $e_1=e_2=(u,v)$ with $w_{e_1}+w_{e_2}= w_e$, then the new instance gets only harder, i.e., any online algorithm for the new instance can be easily adapted to the original instance with the same or better performance. 
Indeed, notice that the probability that at least one of the edges $e_1, e_2$ exists equals $1-e^{-w_{e_1}} \cdot e^{-w_{e_2}} = 1-e^{-w_e}$, the probability that $e$ exists, i.e., there is a probability coupling between the event that $e$ exists with the event that at least one of $e_1,e_2$ exists. 
Then, we can substitute $e$ in any arrival order with a pair of consecutive edges $e_1$ and $e_2$ and match $e$ whenever the online algorithm matches  $e_1$ or $e_2$ in the modified instance.
%
Thus the log-normalized weight is the correct notion for us to do additive operations over the existence probabilities and leads to the following definition of the regular stochastic graph.

\begin{definition}
	A graph $G$ is a \emph{log-normalized $c$-regular graph} if for every $v \in L\cup R$, $\sum_{e \in E_v} w_e = c$. 
\end{definition}

We restrict our attention to log-normalized regular graphs in the remainder of this section. Our goal is to analyze the performance of \textsc{Greedy} on log-normalized $c$-regular graphs for a small constant $c$. Remarkably, it is not easy to give a precise answer and produce a tight worst-case estimate even for a specific value $c=1$. 

We first introduce a few short hand notations for the events that will be frequently used throughout the paper.

\begin{definition}
	\label{def:reg_first_edge}
Fix an arbitrary edge arrival order $\order$ and an edge $e \in E_{uv}$, define the following events:
\begin{enumerate}
	\item $\exists e$: the event that $e$ is realized.
	\item $\match{u}{e}$: the event that $u$ is matched right before edge $e$ arrives.
	\item $\free{u}{e}$: the event that no edge of $E_u$ is realized before $e$ arrives. Let $q_u(e) \eqdef \Pr{Q_u(e)}$.
	\item $\firste{u}{e}\eqdef\free{u}{e}\cap\exists e$: the event that $e$ is the first realized edge of vertex $v$.
\end{enumerate}
\end{definition}

The following lemma gives a lower bound on the matching probability of any vertex. This analytically tractable bound will allow us to reduce the global optimization for the competitive ratio of our algorithm to the local optimization per individual vertex.
The lemma will also be useful for the general case, i.e., for not necessarily regular graphs, which we discuss in Section~\ref{sec:general}. 
To be consistent with the notations of the next section, let $x_e = \frac{w_e}{c}$ and $y_e = 1 - e^{-w_e}$. We have the property that $\sum_{e\in E_u} x_e = 1$ for every $u\in V$ and $y_e$ equals the probability that $e$ is realized ($\exists e$).

\begin{lemma}
	\label{lem:c_regular}
	For all $v \in R$, 
	\begin{equation}
	\label{eq:simple_bound}
	\Pr{v \text{ is matched}} \ge \Pr{\bigcup_{e = (u, v) \in E_v} \firste{u}{e}}
	\ge \sum_{e = (u, v) \in E_v} x_e \cdot \left( 1 - \exp\left(- \frac{q_u(e) \cdot y_e}{x_e}\right) \right).
	\end{equation}
\end{lemma}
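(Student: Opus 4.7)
The lemma splits into two inequalities of quite different flavors: the first is a deterministic/combinatorial statement about greedy on a fixed realization, while the second is an analytic estimate that reduces to independence together with two elementary inequalities. I would prove them in order.

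For the first inequality, I would show that whenever $\firste{u}{e}$ occurs for some $e=(u,v)\in E_v$, the vertex $v$ is matched at termination. Since $\firste{u}{e}=\free{u}{e}\cap\exists e$, both (a) $e$ is realized and (b) no edge of $E_u$ is realized before $e$ arrives. Condition (b) forces $u$ to be unmatched when $e$ arrives, because greedy can only match $u$ upon seeing a realized incident edge. Hence when greedy processes $e$, either $v$ was already matched and remains so, or $v$ is unmatched and greedy takes $e$ into the matching. Either way $v$ is matched at the end, and the first inequality follows by taking a union over $e$.

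For the second inequality, I would partition $E_v$ by the left endpoint and set $A_u \eqdef \bigcup_{e\in E_{uv}}\firste{u}{e}$. The events $\{A_u\}_u$ depend on disjoint collections of edge realizations (those incident to distinct left vertices) and are therefore mutually independent. Within a fixed $u$, the events $\firste{u}{e}$ for $e\in E_{uv}$ are pairwise disjoint because at most one edge of $E_u$ can be the first realized one, giving $\Pr{A_u}=\sum_{e\in E_{uv}} q_u(e)\,y_e$. Applying $1-t\le e^{-t}$ factor-by-factor to the product then yields
\[
\Pr{\bigcup_{e\in E_v}\firste{u}{e}} \;=\; 1-\prod_{u}\bigl(1-\Pr{A_u}\bigr) \;\ge\; 1-\exp\Bigl(-\sum_{e\in E_v} q_u(e)\,y_e\Bigr).
\]
Finally, since $G$ is log-normalized $c$-regular at $v$ we have $\sum_{e\in E_v} x_e = 1$, so Jensen's inequality for the concave function $f(t)=1-e^{-t}$ gives
\[
\sum_{e\in E_v} x_e\bigl(1-\exp(-q_u(e)\,y_e/x_e)\bigr) \;\le\; 1-\exp\Bigl(-\sum_{e\in E_v} q_u(e)\,y_e\Bigr).
\]
Chaining the two displays finishes the second inequality.

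There is no real obstacle so much as a conceptual subtlety: both the $1-t\le e^{-t}$ step and Jensen's inequality loosen the bound in the same direction, with the intermediate quantity $1-\exp(-\sum q_u(e)\,y_e)$ sandwiching the left-hand side from below and the right-hand side from above. One could stop at that tighter intermediate bound, but it is exactly the weaker $x_e$-weighted form stated in the lemma that is needed for the rest of the paper: each summand depends only on a single edge, so the global competitive-ratio optimization will decompose into independent per-edge terms, which is precisely what allows the later reduction to a low-dimensional continuous program.
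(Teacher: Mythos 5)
Your proof is correct and follows essentially the same route as the paper's: the same observation that $\firste{u}{e}$ forces $v$ to end up matched, the same factorization into independent events $A_u$ across distinct left endpoints with pairwise-disjoint constituents, and the same two-step relaxation via $1-t\le e^{-t}$ followed by Jensen's inequality for the concave map $t\mapsto 1-e^{-t}$ using $\sum_{e\in E_v}x_e=1$. The only stylistic difference is that you explicitly isolate the intermediate bound $1-\exp(-\sum_e q_u(e)y_e)$ and note why the $x_e$-weighted form is the useful one downstream, which the paper leaves implicit.
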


\begin{proof}
For each edge $e \in E_v$, consider the case when edge $e=(u, v)$ arrives and the event $\firste{u}{e}=\free{u}{e} \cap \exists e$ happens. At this moment, either $v$ is already matched, or $e$ will be included in the matching by \textsc{Greedy}. Therefore, whenever $\exists u\in L$ such that $\firste{u}{e}$ is true, $v$ is covered by \textsc{Greedy}. 
	
Next, the events $\left\{ \bigcup_{e \in E_{uv}} \firste{u}{e} \right\}_{u\in L}$ are mutually independent, since (i) the event $\bigcup_{e \in E_{uv}} \firste{u}{e}$ only depends on the random realization of the edges in $E_u$ and (ii) $E_u\cap E_{u'}=\emptyset$ when $u\ne u'$.\footnote{It is the only place where we use that $G$ is bipartite. Indeed, our result can be generalized to triangle-free graphs.}
	 
Lastly, $\firste{u}{e_1}\cap \firste{u}{e_2} = \emptyset$ for any $e_1, e_2 \in E_{uv}$. Hence, $\Pr{\bigcup_{e \in E_{uv}} \firste{u}{e}} = \sum_{e \in E_{uv}} q_u(e) \cdot y_e$.
Putting the above observations together, we have	
\begin{multline*}
	\Pr{v \text{ is matched}} \ge  \Pr{\bigcup_{e \in E_v} \firste{u}{e}} = 1 - \Pr{\bigcap_{e \in E_v} \overline{\firste{u}{e}}} =
	1 - \prod_{u} \Pr{\bigcap_{e \in E_{uv}} \overline{\firste{u}{e}}}
	\\
	 = 1 - \prod_u \InParentheses{1 - \sum_{e \in E_{uv}} q_u(e) \cdot y_e}  \ge  
	 1 - \prod_{u} \exp\InParentheses{-\sum_{e \in E_{uv}} q_u(e) \cdot y_e}\\ 
	 = 1-\exp\InParentheses{-\sum_{e \in E_v} q_u(e) \cdot y_e} 
	\ge  \sum_{e \in E_v} x_e \cdot \left( 1 - \exp\left(- \frac{q_u(e) \cdot y_e}{x_e}\right) \right),
\end{multline*}	
where the second inequality follows from the fact that $1-z \le e^{-z}$ and the last inequality follows from Jensen's inequality and the concavity of function $1-\exp(-z)$. 	
\end{proof}

Thus, we may think of the quantity $x_e \cdot \left(1- \exp\left( -\frac{q_u(e)\cdot y_e}{x_e}\right)\right)$ as the contribution of edge $e$ in the algorithm.\footnote{Note that this quantity is not necessarily a lower bound of the probability that edge $e$ is matched.} Observe that this contribution depends on the event $Q_u(e)$ for $u \in L$. We sum the~\eqref{eq:simple_bound} bound over all $v\in R$ and change the order of summations.
\begin{align}
\label{eq:greedy_rone}
\alg = \sum_{v \in R} \Pr{v \text{ is matched}} \ge & \sum_{v \in R} \sum_{e \in E_v} x_e \cdot \left( 1 - \exp\left(- \frac{q_u(e) \cdot y_e}{x_e}\right) \right) \notag \\
= & \sum_{u \in L} \sum_{e \in E_u} x_e \cdot \left( 1 - \exp\left(- \frac{q_u(e) \cdot y_e}{x_e}\right) \right).
\end{align}

\begin{lemma}
\label{lem:opt_regular}
For all $u \in L$,
\[
\sum_{e \in E_u} x_e \cdot \left( 1 - \exp\left(- \frac{q_u(e) \cdot y_e}{x_e}\right) \right) \ge  \int_{0}^{1} \left( 1-e^{-ce^{-cz}} \right) \dd z.
\]
\end{lemma}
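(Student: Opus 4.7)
The plan is to fix an arbitrary arrival order of the edges $e_1,e_2,\ldots,e_k$ in $E_u$, reduce the statement to a clean per-edge inequality, and finish with a single application of Jensen's inequality to the convex map $x\mapsto e^{-x}$.

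First I would set up notation that matches the form of the target integral. Write $W_0=0$ and $W_i=\sum_{j\le i}w_{e_j}$; by log-normalized $c$-regularity, $W_k=c$. Independence of the edge realizations gives
$$q_u(e_i)=\prod_{j<i}\Pr[\neg\exists e_j]=\prod_{j<i}e^{-w_{e_j}}=e^{-W_{i-1}},$$
while $y_{e_i}=1-e^{-w_{e_i}}$ and $x_{e_i}=w_{e_i}/c$. So the $i$-th term on the left of the lemma equals
$$\frac{w_{e_i}}{c}\left(1-\exp\left(-\frac{c\,e^{-W_{i-1}}(1-e^{-w_{e_i}})}{w_{e_i}}\right)\right).$$
After the substitution $s=cz$, the right side of the lemma becomes $\frac{1}{c}\int_0^c(1-e^{-ce^{-s}})\,ds$, and the intervals $[W_{i-1},W_i]$ tile $[0,c]$. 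Hence it suffices to prove the per-edge inequality
$$\frac{w_{e_i}}{c}\left(1-\exp\left(-\frac{c\,e^{-W_{i-1}}(1-e^{-w_{e_i}})}{w_{e_i}}\right)\right)\ \ge\ \frac{1}{c}\int_{W_{i-1}}^{W_i}\!\bigl(1-e^{-ce^{-s}}\bigr)\,ds.$$

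To dispatch the per-edge inequality, I would introduce the abbreviations $A=c\,e^{-W_{i-1}}$ and $\Delta=w_{e_i}$, and shift the integration variable via $u=s-W_{i-1}$. The inequality becomes
$$\exp\!\left(-\frac{A(1-e^{-\Delta})}{\Delta}\right)\ \le\ \frac{1}{\Delta}\int_0^{\Delta}e^{-Ae^{-u}}\,du.$$
Observing that $\frac{A(1-e^{-\Delta})}{\Delta}=\frac{1}{\Delta}\int_0^{\Delta}Ae^{-u}\,du$ is precisely the average of $Ae^{-u}$ over $u\sim\mathrm{Unif}[0,\Delta]$, this is exactly Jensen's inequality applied to the convex function $x\mapsto e^{-x}$: $e^{-\E[X]}\le\E[e^{-X}]$ for $X=Ae^{-U}$.

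Finally I would sum the per-edge inequalities over $i=1,\dots,k$ to obtain
$$\sum_{e\in E_u}x_e\Bigl(1-\exp\bigl(-q_u(e)y_e/x_e\bigr)\Bigr)\ \ge\ \frac{1}{c}\int_0^c\bigl(1-e^{-ce^{-s}}\bigr)ds\ =\ \int_0^1\bigl(1-e^{-ce^{-cz}}\bigr)dz,$$
where the last equality is the change of variables $s=cz$. The only real piece of insight is recognizing the right per-edge partition of the integral that makes Jensen's inequality apply cleanly; once that grouping is identified, the proof reduces to a one-line convexity argument, so I expect no further obstacles.
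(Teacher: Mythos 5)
Your proof is correct and takes essentially the same approach as the paper: both fix the arrival order of $E_u$, partition the integral into per-edge pieces over the log-normalized weights, and close each piece with a single Jensen application (your convex $x\mapsto e^{-x}$ over $X = Ae^{-U}$ is the same inequality as the paper's concave $z\mapsto 1-e^{-q_i z}$ over $Z = ce^{-cU}$, up to the change of variable $u=cz$).
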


\begin{proof}
	Let $u$ be any fixed vertex in $L$ and $e_1, e_2, \ldots, e_k$ be the edges of $E_u$ enumerated according to their arrival order. For notation simplicity, we use $q_i, x_i$ and $y_i$ to denote $q_u(e_i), x_{e_i}$ and $y_{e_i}$ respectively. Then we have
	\[
	\free{u}{e_i} =  \bigcap_{j<i} \overline{\exists e_j}=\bigcap_{j<i} \nexists e_j; \quad \quad q_i = \prob{\free{u}{e_i}}=\prod_{j<i} (1-y_j) = \prod_{j<i} e^{-c\cdot x_j} = e^{-c\cdot \sum_{j<i}x_j}.
	\]
	Since $\int_{0}^{x_i} c \cdot e^{-cz} \dd z= 1-e^{-cx_i} = y_i$ and $1-\exp(- q_i\cdot z)$ is a concave function of $z$ , we can apply Jensen's inequality to get
\begin{multline*}
	 1-e^{-\frac{q_i \cdot y_i}{x_i}} =1-\exp\InParentheses{- q_i\cdot\frac{1}{x_i}\int_{0}^{x_i} c \cdot e^{-cz} \dd z}
	\ge \frac{1}{x_i}\int_{0}^{x_i} \InParentheses{1-\exp\InParentheses{- q_i\cdot c  \cdot e^{-cz}}} \dd z\\
	= \frac{1}{x_i}\int_{0}^{x_i} \InParentheses{1-\exp\InParentheses{-  c \cdot e^{-c\cdot \sum_{j<i}x_j}  \cdot e^{-cz}}} \dd z
	= \frac{1}{x_i}\int_{\sum_{j<i}x_j}^{\sum_{j \le i} x_j} \left( 1-e^{-c e^{-cz}} \right) \dd z.
\end{multline*}	
Summing this inequality over $i \in [k]$, we have
	\begin{align*}
	\sum_{e \in E_u} x_e \cdot \left( 1 - \exp\left(- \frac{q_u(e) \cdot y_e}{x_e}\right) \right) = & \sum_{i=1}^{k} x_i \cdot \left( 1 - \exp\left(- \frac{q_i \cdot y_i}{x_i}\right) \right) \\
	\ge & \sum_{i=1}^{k} \int_{\sum_{j<i}x_j}^{\sum_{j \le i} x_j} \left( 1-e^{-c e^{-cz}} \right) \dd z = \int_{0}^{1} \left( 1-e^{-ce^{-cz}} \right) \dd z.\qedhere
	\end{align*}
\end{proof}
Let us denote the lower bound in the Lemma~\ref{lem:opt_regular} as $h_1(c) \eqdef \int_{0}^{1} \left( 1-e^{-ce^{-cz}} \right) \dd z.$
\begin{theorem}
	\label{thm:regular}
	The competitive ratio of \textsc{Greedy} on $c$-regular graphs is at least $h_1(c)$.
\end{theorem}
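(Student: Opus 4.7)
The plan is to combine the two preceding lemmas with a simple counting argument that is specific to log-normalized $c$-regular graphs. Since the expected size of the matching produced by \textsc{Greedy} satisfies $\alg = \sum_{v\in R}\Pr{v\text{ is matched}}$, I would start by summing the bound of Lemma~\ref{lem:c_regular} over all $v\in R$. This produces the chain of inequalities already worked out in~\eqref{eq:greedy_rone}, which crucially rearranges the contributions so that they are grouped by the left endpoint $u$ rather than the right endpoint $v$:
\[
\alg \;\ge\; \sum_{u\in L}\sum_{e\in E_u} x_e\cdot\Bigl(1-\exp\Bigl(-\frac{q_u(e)\cdot y_e}{x_e}\Bigr)\Bigr).
\]
This reduction from a global bound to a per-$u$ bound is exactly what allows Lemma~\ref{lem:opt_regular} to take over.

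Next, I would apply Lemma~\ref{lem:opt_regular} to each $u\in L$ individually. The lemma tells us that every inner sum is at least $h_1(c)=\int_0^1 (1-e^{-ce^{-cz}})\,\dd z$, so summing over $u\in L$ yields $\alg \ge |L|\cdot h_1(c)$.

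To close the argument I only need an upper bound on $\opt$. For a log-normalized $c$-regular graph, summing the log-weights $\sum_{e\in E_v}w_e=c$ on each side gives $c|L|=\sum_{e\in E}w_e=c|R|$, so $|L|=|R|$. Since any bipartite matching has size at most $\min(|L|,|R|)=|L|$, the prophet benchmark satisfies $\opt=\E{\text{size of maximum matching}} \le |L|$. Therefore $\alg/\opt \ge |L|\cdot h_1(c)/|L| = h_1(c)$, which is the desired competitive ratio.

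All the technical content has already been done by Lemma~\ref{lem:c_regular} (relaxing the matching event to an independent-per-$u$ ``first realized edge'' event) and Lemma~\ref{lem:opt_regular} (the Jensen/integral step that evaluates the per-$u$ optimization), so there is no real obstacle in this final step; it is essentially a one-line assembly plus the observation that regularity forces $|L|=|R|$, which is what makes the trivial benchmark $|L|$ tight enough to yield a nontrivial competitive ratio.
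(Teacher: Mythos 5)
Your proof is correct and is essentially the same as the paper's: sum the bound of Lemma~\ref{lem:c_regular} over $v\in R$, regroup by $u\in L$ via~\eqref{eq:greedy_rone}, apply Lemma~\ref{lem:opt_regular} per $u$ to get $\alg \ge |L|\cdot h_1(c)$, and compare against $\opt \le |L|$. The extra observation that regularity forces $|L|=|R|$ is a nice sanity check but not logically required — $\opt \le |L|$ holds for any bipartite matching instance.
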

\begin{proof}
By equation~\eqref{eq:greedy_rone} and Lemma~\ref{lem:opt_regular}, we have  
\[
\alg \ge \sum_{u \in L} \sum_{e \in E_u} x_e \cdot \left(1 - \exp\left(- \frac{q_u(e) \cdot y_e}{x_e}\right) \right) \ge \sum_{u \in L} h_1(c).
\]
This concludes the theorem by noticing that $|L|$ is an upper bound of $\opt$.
\end{proof}

\paragraph{Remark.} When $c=2$, the competitive ratio is at least $h_1(c) \ge 0.532$. Notice that $h_1$ has a peak at $c\approx 2.1$, but it gets smaller again for $c>3$ (see Appendix~\ref{sec:appendix} for a plot) and our analysis gives relatively weak results for large $c$. One reason is because of the relaxation from Lemma~\ref{lem:c_regular}. On the other hand, \textsc{Greedy} indeed does not perform well on $c$-regular graphs when $c$ is large. In particular, \textsc{Greedy} is no better than $0.5$-competitive on $c$-regular graphs when $c$ goes to infinity.
\begin{theorem}
	\label{thm:greedy_large_regular}
	\textsc{Greedy} is at most $0.5$-competitive on log-normalized $c$-regular graphs when $c\to\infty$.
\end{theorem}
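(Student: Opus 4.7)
The plan is to exhibit, for each sufficiently large $c$, a log-normalized $c$-regular bipartite graph on which some adversarial arrival order drives the competitive ratio of \textsc{Greedy} down to $\tfrac12 + o(1)$ as $c \to \infty$. The core idea is to build a deterministic $d$-regular bipartite skeleton $H_d$ that admits a maximal matching $M$ of size roughly $n/2$ (a ``dominating'' matching), then assign every edge of $H_d$ log-normalized weight $c/d$, choosing $d = d(c)$ so that $d \to \infty$ and $c/d \to \infty$ (e.g., $d = \lfloor \sqrt{c}\, \rfloor$). Each edge is then realized independently with probability $p := 1 - e^{-c/d} \to 1$, so the stochastic instance is essentially the deterministic $H_d$.

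For the skeleton, set $m := \tfrac{nd}{2d-1}$ with $n$ a multiple of $2d-1$, and partition $L = L_1 \sqcup L_2$, $R = R_1 \sqcup R_2$ with $|L_1|=|R_1|=m$ and $|L_2|=|R_2|=n-m$. Take the edges of $H_d$ to be (i) a perfect matching $M$ between $L_1$ and $R_1$; (ii) a bipartite graph between $L_1$ and $R_2$ that is $(d-1)$-regular on the $L_1$-side and $d$-regular on the $R_2$-side; and (iii) a bipartite graph between $L_2$ and $R_1$ that is $d$-regular on the $L_2$-side and $(d-1)$-regular on the $R_1$-side. The two biregular pieces can be realized because $m(d-1) = (n-m)d$. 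Every vertex then has degree $d$; the set $L_2 \cup R_2$ is independent, so $M$ is maximal; and by K\"onig's theorem the $d$-regular bipartite graph $H_d$ also admits a perfect matching $M^\star$ of size $n$.

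For the analysis, assign each edge of $H_d$ log-normalized weight $c/d$, making the instance log-normalized $c$-regular. Since $M^\star$ has $n$ edges each realized independently with probability $p$, $\opt \ge p\,n$. Let the adversary present the edges of $M$ first and all remaining edges afterwards. \textsc{Greedy} accepts exactly the realized subset $M' \subseteq M$, with $\E{|M'|} = pm$. After $M$ is processed, the only remaining edges connect $L_1$ to $R_2$ or $L_2$ to $R_1$, because all $L_1$--$R_1$ edges already lie in $M$ and $L_2 \cup R_2$ is independent. Any later match using an $L_1$--$R_2$ edge must use a vertex of $L_1 \setminus V(M')$, of which there are only $m - |M'|$; symmetrically, at most $m - |M'|$ additional $L_2$--$R_1$ matches occur. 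Hence the greedy matching has size at most $|M'| + 2(m - |M'|) = 2m - |M'|$, giving $\alg \le (2-p)\,m$. Combining,
\[
\frac{\alg}{\opt} \;\le\; \frac{(2-p)\,m}{p\,n} \;=\; \frac{2-p}{p} \cdot \frac{d}{2d-1} \;\xrightarrow[c\to\infty]{}\; \tfrac12,
\]
since $p \to 1$ and $\tfrac{d}{2d-1} \to \tfrac12$ under our scaling.

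The technical heart is the skeleton $H_d$: the matching $M$ must be maximal yet contain only $\tfrac{nd}{2d-1}$ edges, which forces the $M$-uncovered independent set $L_2 \cup R_2$ to have balanced sides, and then $d$-regularity pins down $m$ and the biregular parameters of the off-diagonal blocks uniquely. The underlying combinatorics is elementary but fiddly, and verifying that $H_d$ is actually $d$-regular and that $M$ really is maximal (no $L_1$--$R_1$ edge outside $M$ and no $L_2$--$R_2$ edge) is the only step where one can easily slip. Once the skeleton is in hand, the stochastic analysis is essentially a one-line computation.
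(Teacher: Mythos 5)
Your proof is correct and takes essentially the same route as the paper: the paper's instance is precisely your skeleton with complete bipartite off-diagonal blocks ($|L_1|=|R_1|=n+1$, $|L_2|=|R_2|=n$, so $d=n+1$, each edge realized with probability $1-\varepsilon$), and the adversary plays the diagonal matching first so that \textsc{Greedy} matches only $\approx n+1$ edges against a perfect matching of size $2n+1$. Your write-up is just slightly more general (arbitrary $(d-1,d)$-biregular off-diagonal blocks rather than complete bipartite) and makes the joint limit $d\to\infty$, $c/d\to\infty$ explicit, which the paper handles implicitly by letting $n\to\infty$ and $\varepsilon\to 0$ together.
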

\begin{proof}
Consider the graph shown in Figure~\ref{fig:regular}. We use $L_1 = \{u_i\}_{i=1}^{n+1}, R_1 = \{v_j\}_{j=1}^{n+1}, L_2 = \{u_i'\}_{i=1}^n$ and $R_2 = \{v_j'\}_{j=1}^n$ to denote the vertices in the graph. The edges are defined as the following:
	\begin{enumerate}
		\item For each $i \in [n+1]$, there is a (red solid) edge $(u_i, v_i)$ with existence probability $1-\varepsilon$.
		\item For each pair of $(u,v) \in \left( L_2 \times R_1 \right) \cup \left( L_1 \times R_2 \right)$, there is a (green/blue dashed) edge $(u,v)$ with existence probability $1-\varepsilon$.
	\end{enumerate}
	\begin{figure}
		\centering
		\includegraphics[width=0.8\linewidth]{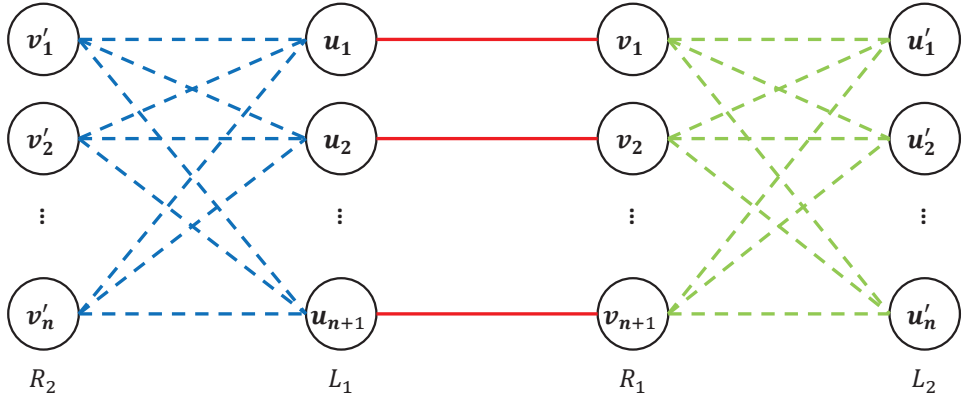}
		\caption{Hard instance for \textsc{Greedy} on regular graphs}
		\label{fig:regular}
	\end{figure}
	It is easy to verify the graph is log-normalized regular. When $\varepsilon \to 0$, with high probability, the graph admits a perfect matching with size $2n + 1$.
	On the other hand, consider the case when the red edges arrive first. With high probability, all these edges exist and \textsc{Greedy} matches $n + 1$ edges. This finishes the proof since $\frac{n+1}{2n+1} \to \frac{1}{2}$ when $n \to \infty$.
\end{proof}

\section{Prune \& Greedy: General Graphs}
\label{sec:general}

The fact that \textsc{Greedy} beats half on log-normalized $c=2$ regular graphs lends itself to the following natural two step adaptation for general graphs: (i) prune (remove or decrease probabilities of certain edges in $G$) such that the log-normalized degree of each vertex in the remaining graph is $2$; (ii) greedily take every edge in the pruned instance $G_c$. Specifically, upon the arrival of an edge $e$, we first adjust its probability by dropping $e$ so that its realization probability is consistent with $e$'s log-normalized weight in $G_c$, then we match the realized edge if none of $e$'s endpoints are currently matched. This approach would already yield the desired result for the dense graphs that can be pruned to the log-normalized $2$-regular graph $G_c$. However, such a direct strategy fails for the graphs that have a few small degree vertices. 

Before we proceed with the fix for the general graphs, let us take a closer look at the proof of Theorem~\ref{thm:regular}. Note that in the theorem we actually compare our algorithm with a stronger benchmark, half the total number of vertices in $G$. The problem with such a benchmark, is that it may be too strong for any algorithm to approximate. 
To address this issue, we have to adjust our algorithm and analysis to handle low degree vertices.
To this end, we can calculate $x_e$, the probability that $e$ appears in the maximum matching of the random graph for every $e$, as the first step of our algorithm. By definition, $\opt = \sum_{e\in E} x_e$ is the right benchmark to compare with.
Alternatively, we can solve the following LP introduced by Gamlath et al.~\cite{soda/GamlathKS19}.\footnote{The LP is polynomial-time solvable. See \cite{soda/GamlathKS19} for the details.}
\begin{equation}
\begin{array}{ll@{}ll}
\operatorname*{\text{maximize}}\limits_{(x_e\ge 0)_{e\in E}}  & \displaystyle\sum\limits_{e \in E} &x_e\\[3ex]
\text{subject to}& \displaystyle\sum\limits_{e \in F} &x_e \leq 1 - \prod\limits_{e \in F} (1 - p_e),  &\forall v \in L \cup R, \ \forall F \subseteq E_v.
\end{array}
\label{eq:lp}
\end{equation}
The constraints of the LP simply state that for each vertex $v$ and subset $F \subseteq E_v$ of edges incident to $v$, the probability that an edge of $F$ appears in the maximum matching is at most the probability that at least one edge of $F$ is realized. Note that the value of each variable $x_e$ in the LP~\eqref{eq:lp} does not necessarily match the exact probability of  $e$ to appear in the maximum matching. However, $\sum_{e \in E} x_e$ still serves as a valid upper bound on $\opt$. 
As a matter of fact, our analysis works for either benchmark: the solution to LP~\eqref{eq:lp}, or for each $x_e$ being the probability of $e$ to appear in the optimal matching. To obtain the desired competitive ratio we will only need LP~\eqref{eq:lp} constraints on $\vect{x}=(x_e)_{e\in E}$, which hold for the former and the latter benchmark. We choose the LP~\eqref{eq:lp} formulation in the description of the algorithm and the following analysis, since the LP optimal solution is a stronger benchmark and important constraints are explicitly stated in the LP. 

A natural approach for general graphs would be to prune the graph according to the LP solution $(x_e)_{e\in E}$. To build some intuition let us consider what happens if we directly use $x_e$s instead of $p_e$s:
\begin{enumerate}
	\item prune the graph by decreasing the probabilities of each edge from $p_e$ to $x_e$,
	\item run \textsc{Greedy} on the pruned instance. 
\end{enumerate}
Consider a special case of complete bipartite graph $K_{n,n}$ where each edge is realized with probability $1$. The optimal solution to LP~\eqref{eq:lp} is $x_e=\frac{1}{n}$ for all $e\in E$, as the maximum matching has size $n=\sum_{e\in E}x_e$. 
As $-\ln(1-x) \approx x$ when $x$ is small, we effectively run \textsc{Greedy} on log-normalized $1$-regular graph after pruning  $K_{n,n}$. Theorem~\ref{lem:c_regular} from previous section gives $f(1) \approx 0.459 < 0.5$ in this case. Moreover, a simple computer aided simulation suggests that \textsc{Greedy} matches no more than $0.5$ fraction of all vertices in this case. In this simulation we consider a regular complete graph $G$ with $|L| = |R| = n$, where each edge has probability $\frac{1}{n}$; the edges arrive in random (uniformly distributed) order. The Table~\ref{tab:exp} summarizes the results for different $n$ and number of trials $T(n)$:
\begin{table}[ht]
	\centering
	\begin{tabular}{|c|c|c||c|c|c|}
		\hline
		$n$ & $T(n)$ & $\alg / n$ &  $n$ & $T(n)$ & $\alg / n$  \\ \hhline{|=|=|=#=|=|=|}
		$3$    &   $10^{11}$ &  $0.53132$     &  $300$   &  $10^7$ &   $0.50029$ \\ \hline
		$10$   &   $10^{10}$ &  $0.50862$     &  $1000$  &  $10^6$ &   $0.50009$ \\ \hline
		$30$   &   $10^9$    &  $0.50281$     &  $3000$  &  $10^5$ &   $0.50002$ \\ \hline
		$100$  &   $10^8$    &  $0.50084$     &  $10000$ &  $10^4$ &   $0.49997$ \\ \hline
	\end{tabular}
	\caption{Empirical performance of \textsc{Greedy} on the $1$-regular graph $G$}
	\label{tab:exp}
\end{table}

This means that pruning probabilities directly to $x_e$ is too much and we need a more conservative pruning step. In particular, Theorem~\ref{thm:regular} suggests to prune the graph so that the log-normalized weight of edge $e$ becomes $c \cdot x_e$.
On the other hand, for some edges, $p_e$ can be as small as $x_e$, in which case we have a cap on the existence probability.  For those edges, it is reasonable to keep the existence probability as the original graph. Formally, we consider the following algorithm.


\begin{algorithm}
	\caption{Prune \& Greedy}
	\label{alg:rog}
	\begin{algorithmic}[1]
		\State Solve LP~\eqref{eq:lp} and let $\{x_e\}_{e\in E}$ be the optimal solution.
		\State Prune the graph by decreasing the probabilities of each edge from $p_e$ to $y_e = \min (p_e, 1 - e^{-c \cdot x_e})$.
		\State Run \textsc{Greedy} on the pruned instance.
	\end{algorithmic}
\end{algorithm}

In an easy case when $y_e = 1-e^{-c\cdot x_e}$ for all edges $e\in E$, we can adapt our analysis for $c$-regular graphs with a similar performance guarantee. 
On the other hand, if  $y_e=x_e$ for all edges, then the algorithm might not be better than $0.5$-competitive according to the previous discussion. However, the constraints from LP~\eqref{eq:lp} guarantee that this cannot happen for all $e$.

Note that Lemma~\ref{lem:c_regular} and equation~\eqref{eq:greedy_rone} apply to our Prune \& Greedy algorithm with the 
$\{x_e,y_e\}$ defined in this section. We shall prove Lemma~\ref{lem:opt_general} an analog of Lemma~\ref{lem:opt_regular} to conclude Theorem~\ref{thm:general}, which is the main result of this section. The proof of  Lemma~\ref{lem:opt_general} is highly technical (requires us to solve a rather non-trivial optimization problem). We defer its proof to the end of the section.

\begin{lemma}
	\label{lem:opt_general}
	For all $u\in L$, when $c = 1.7$,
	\[
	\sum_{e \in E_u} x_e \cdot \left( 1 - \exp\left(- \frac{q_u(e) \cdot y_e}{x_e}\right) \right) \ge 0.503 \cdot \sum_{e \in E_u} x_e
	\]
\end{lemma}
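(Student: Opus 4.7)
My plan is to generalize the Jensen-based analysis of Lemma~\ref{lem:opt_regular} to account for the \emph{irregular} edges at $u$, namely those with pruned log-normalized weight $w_i := -\ln(1-y_i)$ strictly less than $c x_i$ (equivalently, $y_i = p_i < 1-e^{-c x_i}$); the remaining \emph{regular} edges satisfy $w_i = c x_i$. The starting identity is
\[
1-e^{-w_i} \;=\; \int_0^{x_i}\frac{w_i}{x_i}\,e^{-w_i z/x_i}\,\dd z,
\]
which lets one view $q_i y_i/x_i$ as the expectation of $q_i(w_i/x_i)e^{-w_i Z/x_i}$ under $Z\sim U[0,x_i]$. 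Applying Jensen's inequality to the concave function $1-e^{-\cdot}$ and multiplying by $x_i$ gives
\[
x_i\bigl(1-e^{-q_i y_i/x_i}\bigr) \;\ge\; \int_0^{x_i}\Bigl(1-\exp\bigl(-q_i(w_i/x_i)\,e^{-w_i z/x_i}\bigr)\Bigr)\,\dd z.
\]
After the change of variables $u = \sum_{j<i}w_j + (w_i/x_i)\,z$ (using $q_i = e^{-\sum_{j<i}w_j}$), the inner exponent collapses to $\rho(u)e^{-u}$, and summing over $i$ yields
\[
\sum_i x_i\bigl(1-e^{-q_i y_i/x_i}\bigr) \;\ge\; \int_0^W \frac{1-e^{-\rho(u) e^{-u}}}{\rho(u)}\,\dd u,
\]
where $\rho(u) := w_i/x_i \in (0,c]$ on $[W_{i-1},W_i]$ with $W_i := \sum_{j \le i} w_j$ and $W := W_k$. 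Since $\sum_i x_i = \int_0^W \dd u/\rho(u)$, the lemma reduces to proving
\[
\int_0^W \frac{1-e^{-\rho(u) e^{-u}}}{\rho(u)}\,\dd u \;\ge\; 0.503\,\int_0^W \frac{\dd u}{\rho(u)}.
\]

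Next, I would extract the usable constraints on $\rho$. Trivially $\rho(u) \le c$. The LP constraint~\eqref{eq:lp}, restricted to any subset $F$ of irregular edges---for which $1-p_i = e^{-w_i}$---gives $\sum_{i \in F} x_i \le 1 - e^{-\sum_{i \in F} w_i}$. Re-parameterizing by $u$, this becomes
\[
\int_U \frac{\dd u}{\rho(u)} \;\le\; 1-e^{-|U|} \qquad \text{for every measurable } U \subseteq \{u:\rho(u)<c\},
\]
with $|U|$ the Lebesgue measure. No analogous constraint applies on the regular region $\{\rho=c\}$.

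The problem thus reduces to a functional minimization of the ratio above over measurable $\rho:[0,W]\to(0,c]$ and $W\ge 0$, subject to the LP constraints. I expect standard rearrangement/exchange arguments to show that the worst-case $\rho$ is piecewise constant with only a few pieces, arranged monotonically in $u$ (say, a single irregular block $\rho=\rho_1<c$ followed by a regular block $\rho=c$, with the LP binding on the irregular portion). This reduces the optimization to a parametric one over a constant number of real variables, which at $c=1.7$ can be solved numerically to verify the $0.503$ bound.

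The main obstacle is precisely this structural reduction: both the integrand and the LP constraint are non-local in $u$, and the LP is a continuum of constraints indexed by arbitrary measurable subsets, so a direct Lagrangian/KKT analysis is delicate. Justifying that the minimizer has a simple piecewise form---and pinning down its ordering and parameters---is the core technical content of the proof; once that is done, the final numerical verification at $c=1.7$ is straightforward.
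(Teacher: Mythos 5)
Your integral reformulation is sound and is essentially a continuum rendering of the paper's optimization problem~\eqref{eq:optimization_problem}: the Jensen step, the change of variables to the log-normalized coordinate $u$, and the translation of the LP constraints on irregular edges into $\int_U \dd u/\rho(u) \le 1-e^{-|U|}$ all check out (passing to arbitrary measurable $U$ requires a subdivision argument, which the paper supplies as Lemma~\ref{lem:subdivision}). The gap is that the step you defer --- identifying the structure of the minimizing $\rho$ --- is the entire technical content of the proof (Lemma~\ref{lem:rone_worst_case} in the paper), and the structure you conjecture is wrong on both counts, so the numerical verification you propose would be carried out over a family that excludes the true minimizer and would not yield a valid bound.

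First, the ordering: the objective is not permutation-invariant because $q_i$ depends on the arrival prefix, and the paper's exchange argument (part (i) of Lemma~\ref{lem:rone_worst_case}, a swap of adjacent $p_i$'s using convexity of $e^{-z}$) shows the minimizer has \emph{decreasing} $y_i$, i.e.\ the regular block $\rho=c$ comes \emph{first} in arrival order and the irregular edges come last --- the opposite of your guess. Second, the irregular tail is not a constant block: the paper shows (part (iii) of Lemma~\ref{lem:rone_worst_case}, via a perturbation argument together with Claim~\ref{cl:pi_equals_piplus}) that at the minimizer the LP constraints are tight for every \emph{suffix} $S=[j..k]$ of the irregular region. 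In your coordinates this reads $\int_{u'}^{W}\dd u/\rho(u)=1-e^{-(W-u')}$ for all $u'$ in that region, which upon differentiation forces $\rho(u)=e^{W-u}$, decreasing continuously from $1/(1-t)$ down to $1$; no single constant $\rho_1$ satisfies this, and a constant block can make only the one constraint for the full irregular region tight. Only after establishing this structure does the problem collapse to the two scalar parameters $s$ and $t$ (the $x$-masses of the regular and irregular parts) and the numerical verification of $h_2(s,t)\ge 0.503$ at $c=1.7$. As written, your proposal reduces the lemma to a variational problem but does not solve it.
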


\begin{theorem}
	\label{thm:general}
	Prune \& Greedy is $0.503$-competitive when $c=1.7$.
\end{theorem}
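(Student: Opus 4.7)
The plan is to deduce Theorem~\ref{thm:general} by chaining together three ingredients already at hand: the Prune \& Greedy--friendly version of equation~\eqref{eq:greedy_rone}, the per-vertex bound of Lemma~\ref{lem:opt_general}, and the fact that LP~\eqref{eq:lp} upper bounds $\opt$. The key observation is that the derivation of Lemma~\ref{lem:c_regular} and of~\eqref{eq:greedy_rone} only uses that $y_e$ is the realization probability of edge $e$ in the instance on which \textsc{Greedy} is actually executed. Since the pruning step in Algorithm~\ref{alg:rog} replaces $p_e$ by $y_e = \min(p_e,\, 1 - e^{-c \cdot x_e})$, inequalities~\eqref{eq:simple_bound} and~\eqref{eq:greedy_rone} continue to hold verbatim with this new $y_e$ (and the $x_e$ coming from LP~\eqref{eq:lp}).

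Given that, the argument is a very short chain of inequalities. First, by~\eqref{eq:greedy_rone},
\[
\alg \;\ge\; \sum_{u \in L} \sum_{e \in E_u} x_e \cdot \left(1 - \exp\left(-\frac{q_u(e) \cdot y_e}{x_e}\right)\right).
\]
Next, I apply Lemma~\ref{lem:opt_general} with $c = 1.7$ to each inner sum over $u \in L$ to obtain
\[
\alg \;\ge\; 0.503 \cdot \sum_{u \in L} \sum_{e \in E_u} x_e \;=\; 0.503 \cdot \sum_{e \in E} x_e,
\]
where the equality uses that, $G$ being bipartite, each edge has exactly one endpoint in $L$. Finally, $\sum_{e \in E} x_e$ is the optimal value of LP~\eqref{eq:lp}, which is a valid relaxation of the expected maximum matching and hence at least $\opt$. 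This yields $\alg \ge 0.503 \cdot \opt$, which is Theorem~\ref{thm:general}.

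The real obstacle of course lies in Lemma~\ref{lem:opt_general}, which the excerpt explicitly defers. To attack it I would partition $E_u$ into the \emph{regular} edges, where $y_e = 1 - e^{-c \cdot x_e}$ and the summand is exactly the one handled by Lemma~\ref{lem:opt_regular}, and the \emph{capped} edges, where $y_e = p_e < 1 - e^{-c \cdot x_e}$ and the Jensen step used there loses. Along the arrival order each regular edge still contributes at least $\int_{a_e}^{a_e + x_e}(1 - e^{-c e^{-cz}})\,\dd z$ exactly as in Lemma~\ref{lem:opt_regular}, while capped edges must be charged separately. The crucial extra ingredient should be the LP constraint $\sum_{e \in F} x_e \le 1 - \prod_{e \in F}(1-p_e)$ applied to subsets $F$ of capped edges; it bounds how much $x_e$-mass can hide behind low-probability edges and prevents an adversary from making all edges capped. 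The hard part will be reducing the resulting per-vertex optimization, which a priori has one parameter per incident edge, to a finite-parameter continuous program tractable by standard numerical solvers, and then certifying that the worst case at $c = 1.7$ is indeed at least $0.503$.
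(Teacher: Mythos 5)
Your proof of Theorem~\ref{thm:general} is correct and matches the paper's own: both chain inequality~\eqref{eq:greedy_rone} (valid with the pruned $y_e$'s), the per-vertex bound of Lemma~\ref{lem:opt_general}, and the fact that $\sum_e x_e \ge \opt$ from LP~\eqref{eq:lp}. Your sketch of how to attack the deferred Lemma~\ref{lem:opt_general} (splitting regular versus capped edges, invoking the LP constraints) also correctly anticipates the paper's strategy, though of course the heavy lifting there is what the paper spends the rest of the section on.
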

\begin{proof}
	We write the lower bound on the performance of \textsc{Greedy} using Lemma~\ref{lem:c_regular}.
	\begin{align*}
	\alg \ge & \sum_{u \in L} \sum_{e \in E_u} x_e \cdot \left( 1 - \exp\left(- \frac{q_u(e) \cdot y_e}{x_e}\right) \right) \tag{by Equation~\eqref{eq:greedy_rone}} \\
	\ge & \sum_{u \in L} 0.503 \cdot \sum_{e \in E_u} x_e = 0.503 \cdot \sum_{e\in E}x_e \ge 0.503 \cdot \opt \tag*{(by Lemma~\ref{lem:opt_general}) \qedhere}.
	\end{align*}
\end{proof}

\subsection{Proof of Lemma~\ref{lem:opt_general}}
Let $\{e_1,e_2,\cdots, e_k\}$ be all edges incident to $u$ and enumerated in their arrival order. To simplify notations, let $p_i, x_i, y_i$ and $q_i$ denote $p_{e_i}, x_{e_i}, y_{e_i}$ and $q_{u}(e_i)$ respectively. 

The optimization problem~\eqref{eq:optimization_problem} captures the ratio that we want to study.

\begin{center}
	\begin{boxedminipage}{0.47\textwidth}
		\begin{align}
		\label{eq:optimization_problem} 
		\minimize_{(p_i\le 1), (x_i\ge 0)} & \sum_{i=1}^k x_i\cdot\InParentheses{1-e^{-\frac{q_i\cdot y_i}{x_i}}} \left/ \vphantom{\sum}
		\right .  \displaystyle\sum\limits_{i=1}^{k} x_i \\
		\text{s.\,t.  } & y_i = \min(p_i, 1 - e^{-c \cdot x_i}),~~~ \forall i \in [k] \notag \\
		& q_i = \prod_{j<i} (1-y_j),~~~~~~~~~~~~~ \forall i \in [k] \notag\\
		& \sum\limits_{i \in S} x_i \leq 1 - \prod\limits_{i \in S} (1 - p_i),~ \forall S \subseteq [k] \notag
		\end{align}
	\end{boxedminipage}
	\begin{boxedminipage}{0.47\textwidth}
		\begin{align}
		\label{eq:opt_general}
		\displaystyle\minimize_{(p_i\le 1), x\ge 0} & \sum_{i=1}^k x \cdot \left(1-e^{-\frac{q_i \cdot y_i}{x}}\right)\left/ \vphantom{\sum}\right . (k x) \\
		\text{s.\,t.  } & y_i = \min(p_i, 1-e^{-cx}),~~~~~ \forall i \in [k] \notag \\
		& q_i = \prod_{j<i} (1-y_j),~~~~~~~~~~~~~ \forall i \in [k]\notag\\
		& |S| \cdot x \leq 1 - \prod_{i \in S} (1 - p_i),~ \forall S \subseteq [k] \notag
		\end{align}
	\end{boxedminipage}
\end{center}

The first family of constraints in \eqref{eq:optimization_problem} comes from the design of our algorithm. The second family of constraints characterizes the probability that $u$ has no realized edge before $e_i$. The last family of constraints follows from LP~\eqref{eq:lp}.

We first decrease the value of optimization~\eqref{eq:optimization_problem} by increasing the size $k$ of the instance and get a simpler optimization problem~\eqref{eq:opt_general}. The fact that we can construct more regular instance with all $x_i=x$ and smaller or equal objective value follows from the ``subdivision'' Lemma~\ref{lem:subdivision} below. 

\begin{lemma}
	\label{lem:subdivision}
	Let $(x_j,p_j)_{j\in[k]}$ be any feasible solution to \eqref{eq:optimization_problem}. Let $e_i$ be any edge $i\in[k]$ which we subdivide into two consecutive parallel edges $e'$ and $e''$. Then there is a feasible solution to the new instance of  \eqref{eq:optimization_problem} with the same $(x_j,p_j)_{j\neq i}$ and $x_{e'}+x_{e''} = x_i$, and smaller objective value. Moreover, $x_{e'}\ge 0$ and $x_{e''}\ge 0$ can be set to have any values subject to $x_{e'}+x_{e''} = x_i$. 
\end{lemma}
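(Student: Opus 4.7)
I would prove the lemma by choosing the probabilities of the two subdivided edges so that the log-normalized weight $-\ln(1-p_i)$ is split proportionally to $x$. Writing $t := x_{e'}/x_i \in [0,1]$, set $p_{e'} := 1 - (1-p_i)^t$ and $p_{e''} := 1 - (1-p_i)^{1-t}$, so that $(1-p_{e'})(1-p_{e''}) = 1-p_i$ by construction. A short case analysis on the sign of $p_i - (1 - e^{-c x_i})$ shows that in either regime ($y_i = p_i$ or $y_i = 1 - e^{-c x_i}$), the corresponding $y_{e'}$ and $y_{e''}$ fall in the same regime and satisfy $(1-y_{e'})(1-y_{e''}) = 1-y_i$. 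This $y$-product identity is what matters for the remainder of the analysis, since it guarantees that $q_j$ is unchanged for every edge $j$ arriving after $e_i$.

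\textbf{Feasibility of the LP constraints.} Among the constraints $\sum_{j \in S'} x_j \leq 1 - \prod_{j \in S'}(1-p_j)$ for subsets $S'$ of the new index set, the ones where $S'$ contains both $e',e''$ or neither are immediate restatements of original constraints (using $(1-p_{e'})(1-p_{e''}) = 1-p_i$). The only nontrivial case is when $S'$ contains exactly one of them, say $e'$. Writing $T := S' \setminus \{e'\}$ and $P_T := \prod_{j \in T}(1-p_j)$, the required bound is $\sum_{j \in T} x_j + t x_i \leq 1 - (1-p_i)^t P_T$. I would derive it from the convex combination $t \cdot (\text{LP for } T \cup \{i\}) + (1-t) \cdot (\text{LP for } T)$, which gives $\sum_{j \in T} x_j + t x_i \leq 1 - (1-tp_i) P_T$, and close the remaining gap using Bernoulli's inequality $(1-p_i)^t \leq 1 - tp_i$ for $t \in [0,1]$.

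\textbf{Objective reduction.} Edges $j \neq i$ contribute unchanged terms to both the numerator and denominator of~\eqref{eq:optimization_problem}, thanks to the $y$-product preservation. For edge $i$, the original contribution $x_i(1 - e^{-q_i y_i/x_i})$ is replaced by $x_{e'}(1-e^{-q_{e'}y_{e'}/x_{e'}}) + x_{e''}(1-e^{-q_{e''}y_{e''}/x_{e''}})$ with $q_{e'}=q_i$ and $q_{e''}=q_i(1-y_{e'})$. A direct computation yields the key identity
\[
q_{e'} y_{e'} + q_{e''} y_{e''} \;=\; q_i\bigl[y_{e'}+(1-y_{e'})y_{e''}\bigr] \;=\; q_i\bigl[1-(1-y_{e'})(1-y_{e''})\bigr] \;=\; q_i y_i.
\]
Jensen's inequality applied to the concave function $u \mapsto 1-e^{-u}$ with weights $x_{e'}/x_i$ and $x_{e''}/x_i$ then gives that the subdivided contribution is at most $x_i(1-e^{-q_i y_i/x_i})$, so the numerator only decreases while the denominator $\sum_j x_j$ stays equal to the original value.

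\textbf{Main obstacle.} The only step requiring any care is the mixed-subset LP feasibility, where one must reconcile the algebraic form $(1-p_i)^t$ coming from log-proportional splitting with the linear form $1 - tp_i$ produced by convex combinations of the original LP constraints; Bernoulli's inequality is exactly what bridges the two. The objective reduction itself is essentially forced: once the sum-preservation $q_{e'}y_{e'}+q_{e''}y_{e''}=q_i y_i$ is observed, a single application of Jensen's inequality closes the argument, and no heavier machinery is needed.
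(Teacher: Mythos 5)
Your proof is correct and follows essentially the same approach as the paper's: log-proportional splitting of the probabilities, the product identity $(1-y_{e'})(1-y_{e''})=1-y_i$ which keeps all downstream $q_j$ fixed, Jensen's inequality on the concave map $u\mapsto 1-e^{-u}$ for the objective, and the convex-combination-plus-Bernoulli argument for the mixed-subset LP constraints. The only cosmetic difference is that you use the formula $p_{e'}=1-(1-p_i)^t$, $p_{e''}=1-(1-p_i)^{1-t}$ uniformly, whereas the paper sets $p_{e'}=p_{e''}=1$ in the regime $y_i=1-e^{-cx_i}$ (which trivializes the LP feasibility check there); your unified choice still works because your Bernoulli argument handles both regimes.
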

\begin{proof}
We fix feasible solution $(x_j,p_j)_{j\in[k]}$ to  \eqref{eq:optimization_problem}, edge $e_i$, and particular $x_{e'}\ge 0$ and $x_{e''}\ge 0$ such that $x_{e'}+x_{e''} = x_i$. Let $\beta = \frac{x_{e'}}{x_i}$. Note that $0\le \beta\le 1$.	
We need to define $p_{e'}$,$p_{e''}$. Consider two cases depending on the value of $y_i$:
	\begin{description}
		\item[Case 1.] If $y_i = 1-e^{-c \cdot x_i}$, then we let $p_{e'}=p_{e''}=1$. 
		\item[Case 2.] If $y_i = p_i$, then we let $p_{e'}=1-(1-p_i)^{\beta}$ and $p_{e''}=1-(1-p_i)^{1-\beta}$.
	\end{description}
	In the first case, $y_{e'}=1-e^{-cx_{e'}}$ and $y_{e''}=1-e^{-cx_{e''}}$. In the second case, we verify that 
	\[
	p_{e'} \le 1-e^{-c x_{e'}} \iff 1-(1-p_i)^{\beta} \le 1-e^{-c x_{e'}} \iff p_i \le 1-e^{-cx_i}.
	\]
	Similarly, $p_{e''} \le 1-e^{-c x_{e''}}$ and we have that $y_{e'}=p_{e'}$ and $y_{e''}=p_{e''}$.
	In both cases, we have that $(1-y_{e'})(1-y_{e''}) = 1-y_i$. Consequently, the value of $q_j$ does not change by our subdivision for all $j \ne i$. Next, we examine the change to the numerator of the objective function.
	\begin{align*}
	& x_{e'} \cdot \left(1 - e^{-q_i \cdot \frac{y_{e'}}{x_{e'}}} \right) + x_{e''} \cdot \left( 1 - e^{-q_i \cdot \frac{(1-y_{e'})y_{e''}}{x_{e''}}} \right) - x_i \cdot \left( 1 - e^{-q_i \cdot \frac{y_i}{x_i}} \right) \\
	= & x_{i} \cdot \left( \frac{x_{e'}}{x_i} \cdot \left( 1 - e^{-q_i \cdot \frac{y_{e'}}{x_{e'}}} \right) + \frac{x_{e''}}{x_i} \cdot \left( 1 - e^{-q_i \cdot \frac{(1-y_{e'})y_{e''}}{x_{e''}}} \right) \right) - x_i \cdot \left(1 - e^{-q_i \cdot \frac{y_i}{x_i}} \right) \\
	\le & x_i \cdot \left( 1 - e^{-\frac{q_i}{x_i} \cdot (y_{e'} + (1-y_{e'})y_{e''})} \right) - x_i \cdot \left(1 - e^{-q_i \cdot \frac{y_i}{x_i}} \right) = 0,
	\end{align*}
where the inequality holds by Jensen's inequality for the concave function $1-e^{-x}$, and the last equality is due to the fact that $(1-y_{e'})(1-y_{e''}) = 1-y_i$. Thus, the subdivision decreases the objective function. We are left to verify that all inequality constraints in \eqref{eq:optimization_problem} for $S\subseteq[k]$ are satisfied. 

First, we consider \textbf{Case 1}, when $y_i=1-e^{-c \cdot x_i}$. We have $p_{e'}=p_{e''}=1$. Then
	\begin{itemize}
		\item if $e',e''\notin S$, the constraint trivially holds (none of $x_j$, $p_j$ change); 
		\item if $e'\in S$ or $e''\in S$, the right-hand side of the constraint becomes $1$.
	\end{itemize}
Next, in \textbf{Case 2}, $y_i = p_i$, $y_{e'}=p_{e'}$, $y_{e''}=p_{e''}$, and $(1-y_{e'})(1-y_{e''}) = 1-y_i$ . Then
	\begin{itemize}
		\item if $e',e''\notin S$, the constraint still holds (none of $x_j$, $p_j$ change); 
		\item if both $e',e''\in S$, the inequality holds since $x_{e'} + x_{e''} = x_i$ and $(1 - p_{e'})(1 - p_{e''}) = 1 - p_i$;
		\item if exactly one of $e', e''\in S$, we may assume w.l.o.g. that $e'\in S, e''\notin S$ (as the other case $e''\in S, e'\notin S$ is symmetric). Let $T \subseteq [k]-\{i\}$ be any set of indexes.  Then
		\begin{align*}
			1 - (1-p_{e'}) \prod_{j \in T} (1-p_j)	= & 1 - (1-p_i)^{\beta} \prod_{j \in T} (1-p_j) \ge  1 - (1-\beta p_i) \prod_{j \in T} (1-p_j) \\
			= & \beta \left(1 - \left(1- p_i\right)  \prod_{j \in T} (1-p_j)\right) + (1-\beta) \left(1 - \prod_{j \in T} (1-p_j) \right) \\
			\ge & \beta \left( x_i + \sum_{j \in T} x_j \right) + (1-\beta) \sum_{j \in T} x_j = \beta x_i +  \sum_{j \in T} x_j = x_{e'} +  \sum_{j \in T} x_j,
		\end{align*}
where to get the first inequality we used the fact $(1-x)^\beta\le 1-x\cdot \beta$, for any $x>-1$, $0\le\beta\le 1$;
the second inequality holds due to the original constraints in~\eqref{eq:optimization_problem} for $S=T\cup\{i\}$ and $S=T$.
  \end{itemize}
	This concludes our proof. \qedhere
\end{proof}
To get \eqref{eq:opt_general}, we can start with the optimal solution to~\eqref{eq:optimization_problem} for any given $k$, then apply multiple times Lemma~\ref{lem:subdivision} to every edge $e_i$, $i\in[k]$ getting an instance with $k' \gg k$ edges and a feasible 
solution with the same value, where almost all $x_j=x$ and at most $k$ edges have $x_e < x$ ($x$ may depend on $k'$). Finally, we can remove all edges with $x_e < x$, keep the rest $x_j$ and $p_j$ untouched and redefine $(q_j)$ according to the recurrent formula. The impact of the change to $q_j$'s before the removal of edges $x_e<x$ can be made vanishingly small as $k'\to\infty$. At the end, we get a feasible solution to~\eqref{eq:optimization_problem} of the form \eqref{eq:opt_general} (for bigger $k$) with almost the same value as the optimum of~\eqref{eq:optimization_problem} for the initial $k$. Thus we can analyze~\eqref{eq:opt_general} without loss of generality instead of~\eqref{eq:optimization_problem}.

We prove the following lemma that describes the optimal solution to problem \eqref{eq:opt_general}.
\begin{lemma}
	\label{lem:rone_worst_case}
	For an optimal solution to \eqref{eq:opt_general}: (i) $(y_i)_{i\in[k]}$ are decreasing; (ii) $\exists$ cut-off point $\ell\in[k]$ such that $y_i=c x$ for $i\le\ell$ and $y_i=p_i$ for $i>\ell$; (iii) constraints $|S|\cdot x\le 1-\prod_{i\in S}(1-p_i)$ are tight for all $S=[j..k],$ where $j>\ell$.
\end{lemma}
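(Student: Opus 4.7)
My approach is a local-improvement argument: show every optimum of \eqref{eq:opt_general} can be massaged into one with the stated structural properties.

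For (i), suppose an optimum has $y_i<y_{i+1}$ for some adjacent pair and swap the pairs $(p_i,y_i)\leftrightarrow(p_{i+1},y_{i+1})$. Since the LP constraints in \eqref{eq:opt_general} depend only on unordered subsets, feasibility is preserved; only $q_{i+1}$ changes, so only the two summands at indices $i$ and $i{+}1$ of the objective move. Setting $r=q_i/x$, $a=y_i$, $b=y_{i+1}$, the numerator difference simplifies via $e^{-r(1-a)b}=e^{-rb}e^{rab}$ and $e^{-r(1-b)a}=e^{-ra}e^{rab}$ to
\[
\text{Before}-\text{After}\;=\;x\,(e^{rab}-1)(e^{-ra}-e^{-rb}),
\]
which is strictly positive when $a<b$. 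Hence no adjacent inversion can remain at an optimum, so $(y_i)$ is non-increasing.

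Part (ii) is then immediate: $y_i=\min(p_i,1-e^{-cx})$ is capped by the constant $1-e^{-cx}$, and non-increasing by (i); so the sequence hugs the cap on an initial block $[1..\ell]$ and lies strictly below it on the suffix, with $y_i=p_i$ there by the $\min$.

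For (iii) I first reduce the exponentially many LP constraints to the $k-\ell$ tail constraints. Using the freedom in $p_i$ for $i\le\ell$ (the objective is independent of these), set $p_i=1-e^{-cx}$, so $(p_i)_{i\in[k]}$ is globally non-increasing. Then for any subset $S$ of size $m$, $\prod_{i\in S}(1-p_i)\le\prod_{i=k-m+1}^{k}(1-p_i)$, so the tail constraint at $j=k-m+1$ implies every cardinality-$m$ constraint. This leaves the $k-\ell$ free variables $p_{\ell+1},\ldots,p_k$ governed by the $k-\ell$ tail constraints $S=[j..k]$ for $j>\ell$, the cap $p_j\le 1-e^{-cx}$, and monotonicity $p_j\ge p_{j+1}$. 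I rule out any other tight constraint at the optimum: a tight $p_j=1-e^{-cx}$ for $j>\ell$ contradicts the definition of $\ell$, and any tight $p_j=p_{j+1}$ can be merged and re-split using Lemma~\ref{lem:subdivision} without changing the objective. Thus all $k-\ell$ degrees of freedom must be consumed by tight tail constraints, and the direction of push is read off from the marginal
\[
\frac{d\,\mathrm{obj}}{d p_j}\;=\;q_j e^{-q_jp_j/x}-\frac{1}{1-p_j}\sum_{i>j}y_i q_i e^{-q_iy_i/x},
\]
via a transfer between $p_j$ and $p_{j+1}$ that preserves tail constraints at larger $j$.

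The main obstacle is the final step of (iii): carefully balancing the two competing marginal effects of $p_j$ (shrinking its own summand versus inflating $q_i$ for $i>j$) and showing that any slack tail constraint with $j>\ell$ admits a strict local improvement which does not violate the other (already tight) tail constraints. The swap argument of (i) and the reduction of subset to tail constraints are routine; the careful perturbation argument ruling out all non-tail extremal configurations is where I expect the real work to lie.
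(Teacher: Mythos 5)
Your part (i) is correct and is the same swap argument as the paper's; your closed-form factorization of the change in the objective, $x(e^{rab}-1)(e^{-ra}-e^{-rb})$, is an explicit and valid substitute for the paper's convexity-of-$\exp(-z)$ argument, and part (ii) is the same one-line deduction. The proof of (iii), however, has a genuine gap. A minor point first: to make $(p_i)$ globally non-increasing you \emph{lower} $p_i$ to $1-e^{-cx}$ for $i\le\ell$, but lowering any $p_i$ shrinks the right-hand sides $1-\prod_{i\in S}(1-p_i)$ and can destroy feasibility of constraints whose index set meets $[\ell]$; the paper instead raises these $p_i$ to $1$, which preserves feasibility, trivializes all such constraints, and still gives monotonicity.

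The serious problem is the inference that ``all $k-\ell$ degrees of freedom must be consumed by tight tail constraints.'' That is a vertex-of-the-polytope argument, and it is not valid here: the objective of \eqref{eq:opt_general} is nonlinear and not convex in $\vect{p}$, so a minimizer need not lie at a vertex of the feasible region, and ruling out tightness of the cap and of the monotonicity constraints does not force the tail constraints to be tight. What is required --- and what constitutes essentially all of the paper's proof of (iii) --- is an explicit strict-improvement perturbation whenever some tail constraint with $j>\ell$ is slack. Concretely, the paper takes the largest slack tail set $S=\{\jj,\dots,k\}$, proves Claim~\ref{cl:pi_equals_piplus} (equal adjacent $p_i=p_{i+1}<1$ forces strict slackness of the constraint at $i+1$, which is what makes room for the perturbation and also yields $p_{\jj}>p_{\jj+1}$), and then increases $p_{\jp}$ while decreasing $p_{\jj}$ keeping $(1-p_{\jp})(1-p_{\jj})$ fixed; the objective strictly drops because $\sum_{i\in[\jp..\jj]}q_iy_i$ is conserved while its largest term $q_{\jp}y_{\jp}$ grows and all other affected terms shrink, and $\exp(-z)$ is convex. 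A separate degenerate case $p_{\jj-1}=1$ is settled by strict concavity of the objective in $p_{\jj}$ alone, forcing the minimum to the boundary. Your marginal-derivative formula for $d\,\mathrm{obj}/dp_j$ is correct, but you never determine its sign nor verify that the proposed transfer preserves feasibility; as you yourself acknowledge, this is exactly where the real work lies, so part (iii) is not proved.
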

\begin{proof}
We prove the three statements sequentially. Let $(p_i)$ be the optimal solution. If $y_i < y_{i+1}$ for an $i \in [k]$. Consider swapping $p_i$ and $p_{i+1}$ in the instance and the other $(p_{j})_{j\ne i,i+1}$ remain the same. Note that the last family of constraints are preserved since the constraints are invariant under any permutation of $p_j$'s. It is easy to see that $q_{j}$'s are not changed for all $j \ne i+1$ and $q_{i+1}'=q_i (1-y_{i+1})$. Moreover, $y_{i}'=y_{i+1}$ and $y_{i+1}'=y_i$. Therefore, to prove that the swap decreases the objective, it suffices to show
\[
1-e^{-\frac{q_{i}\cdot y_i}{x}} + 1-e^{-\frac{q_{i}(1-y_i) \cdot y_{i+1}}{x}} < 1-e^{-\frac{q_{i}\cdot y_{i+1}}{x}} + 1-e^{-\frac{q_{i} (1-y_{i+1}) \cdot y_i}{x}}.
\]
Observe that $q_i y_{i+1} > q_i y_i$, $q_i y_{i+1}>q_i(1-y_i)y_{i+1}$ and $q_{i} y_i + q_{i}(1-y_i) y_{i+1} = q_{i} y_{i+1} + q_{i} (1-y_{i+1}) y_i$. The above inequality is true due to the convexity of the function $\exp(-z)$. A contradiction that concludes the proof of (i), the monotonicity of $y_i$'s.

The second statement follows immediately from (i) according to our definition of $y_i = \min(p_i, cx)$ which is a monotone function with respect to $p_i$.\footnote{Notice that $cx \approx 1 - e^{-cx}$ when $x \approx 0$ in~(\ref{eq:opt_general}).} Let $\ell$ be the cut-off point such that $y_i=cx$ for $i\le \ell$ and $y_i=p_i$ for $i>\ell$.

We are left to prove (iii). Given the statement (ii), we safely assume that $p_i=1$ for all $i \le \ell$ since this would not affect all $y_i, q_i$'s and only trivialize the last family of constraints when $S \cap [\ell] \ne \emptyset$,  since in this case the right-hand side of the constraint equals $1$. Since $p_i=1$ for $i\in[\ell]$ and $1\ge p_i=y_i$ for $i>\ell$, we can assume that $(p_i)_{i\in[k]}$ are decreasing as well.

Given the monotonicity of $p_i$'s, we note that ``critical'' inequality constraints $|S| \cdot x \le 1 - \prod_{i\in S} (1-p_i)$ are those where $S = \{j,j+1,\cdots,k\}$, i.e., the remaining (non-critical) inequality constrains for other sets $S$ are automatically satisfied, if the constraints for $S = \{j,j+1,\cdots,k\}$ hold. Indeed, when restricting to $S$ with a fixed cardinality $s$, the left-hand side of each constraint is the same $|S|\cdot x$, while the right-hand side is minimized when $S$ consists of the $s$ smallest $p_i$, i.e., $\{p_{j},p_{j+1}\ldots,p_k\}$.  We are going to prove (iii), that the critical constraints for $j > \ell$ are tight. 

Now, suppose to the contrary that a critical constraint is not tight for an $S = \{\jj, \jj + 1, \ldots, k\}$ for $\jj>\ell$, while all critical constraints for each $S = \{i, i + 1, \ldots, k\}$ where $i > \jj$ are tight (if $\jj=k$, we don't require any constraints to be tight). We first consider a non-degenerate case when $1> p_{\jj-1}$, which also means that $\jj-1>\ell$ (otherwise $y_{\jj-1}=cx$ and we would set $p_{\jj-1}=1$). Before that we prove the following fact.
\begin{claim}
	\label{cl:pi_equals_piplus}
	If $1>p_i = p_{i+1}$ for $i\in(\ell..k)$, then inequality $|S|\cdot x < 1 - \prod_{j \in S} (1-p_j)$ for $S = \{i+1, \ldots, k\}$ is strict.
\end{claim}
\begin{proof}
	Suppose to the contrary that the inequality is an equality, that is
	\[
	(1-p_{i+1}) = \frac{\prod_{j>i}(1-p_j)}{\prod_{j>i+1}(1-p_j)} = \frac{1-(k-i)x}{\prod_{t>i+1}(1-p_j)}.
	\]
	The inequality constraint for $S=\{i,\ldots,k\}$ gives:
	\[
	(1-p_i)(1-p_{i+1}) = \frac{\prod_{j\ge i}(1-p_j)}{\prod_{j>i+1}(1-p_j)} \le \frac{1-(k-i+1)x}{\prod_{j>i+1}(1-p_j)}.
	\]
	Putting the two equations together and by the assumption that $p_i=p_{i+1}$, we have
	\begin{align*}
	& \left( \frac{1-(k-i)x}{\prod_{j>i+1}(1-p_j)} \right)^2 \le \frac{1-(k-i+1)x}{\prod_{j>i+1}(1-p_j)} \\
	\implies & \frac{\left( 1-(k-i)x\right)^2}{\left(1-(k-i+1)x \right)} \le \prod_{j>i+1} (1-p_j) \le 1-(k-i-1)x,
	\end{align*}
	where the last inequality follows from the constraint for $S=\{i+2,\ldots,k\}$ (if $i+2>k$, the inequality still holds, as $i=k-1$, $1-(k-i-1)x=1$, and $\prod_{j>i+1} (1-p_j)=1$). Thus 
	\[
	\left(1-(k-i+1)x \right)\left(1-(k-i-1)x \right)=\left( 1-(k-i)x\right)^2-x^2\ge\left( 1-(k-i)x\right)^2,
	\] a contradiction.
\end{proof}

\paragraph{Case 1 ($1> p_{\jj-1}$).}
We will get a contradiction by providing an instance with a strictly smaller objective's value. 
Let $\jp$ be the smallest index so that $p_{\jp}= p_{\jp+1}= \cdots = p_{\jj-1}$. So $p_{\jp-1} > p_{\jp}$, if $\jp > 1$. Recall that we consider the case $1>p_{\jj-1}=p_{\jp}$ and thus $\ell<\jp$ (otherwise $y_{\jp}=cx$ and we should have set $p_{\jp}=1$).
By Claim~\ref{cl:pi_equals_piplus}, each inequality in \eqref{eq:opt_general} for $S=\{i,i+1,\ldots,k\}$ with $\jp+1 \le i \le \jj -1$ must be strict. On the other hand, a contra-positive statement to Claim~\ref{cl:pi_equals_piplus} gives us that $p_\jj$ cannot be equal to $p_{\jj+1}$ (if $\jj=k$, this also is true). Thus $p_\jj>p_{\jj+1}$ (if $\jj<k$). If $\jj=k$, then $p_\jj>0$ (otherwise, we can decrease $k$ in \eqref{eq:opt_general}).

We consider the following modification $(\vect{p}',x)$ of \eqref{eq:opt_general}'s feasible solution: slightly increase $p_{\jp}$ and decrease $p_{\jj}$ so that $(1-p_{\jp})(1-p_\jj)$ remains the same; all other $p_i$ for $i\neq \jp,\jj$ and $x$ are the same in $(\vect{p}',x)$ and original optimum $(\vect{p},x)$; $\vect{y}',\vect{q}'$ are redefined according to the formula in \eqref{eq:opt_general}. Note that we can always do such modification when $1>p_\jp >p_\jj>0$.

For any sufficiently small such perturbation of $p_{\jj}$ and $p_{\jp}$, $(p_i')_{i\in[k]}$ remain monotone and all constraints in~\eqref{eq:opt_general} are satisfied.
Indeed, we only need to check the critical constraints in~\eqref{eq:opt_general} for monotone $\vect{p}'$: $\vect{p}'$ and $\vect{p}$ are the same for $S=\{i,i+1,\ldots,k\}$ for $i \in (\jj..k]$; all inequalities for $S=\{i,i+1,\ldots,k\}$ where $i\in[\jp+1,\jj]$ are strict and, therefore, for sufficiently small perturbation of $p_{\jj}$ and $p_{\jp}$ they still hold; for $S=\{i,i+1,\ldots,k\}$ where $ i\in(\ell..\jp]$, the right-hand side of each critical constraint does not change, because $(1-p_{\jj})(1-p_{\jp})=(1-p_{\jj}')(1-p_{\jp}')$. 

Now we examine the changes to $q_i$. Observe that each $q_i'=q_i$ and $y_i'=y_i$ for any $i < \jp$, as $(p_i')_{i<\jp}$ and $(p_i)_{i<\jp}$ are the same. For $i \ge \jj$, we also have $q_i'=q_i$ and $y_i'=y_i$ since $(1-y_{\jp})(1-y_{\jj}) = (1-y_{\jp}')(1-y_{\jj}')$. 
Moreover, we notice that 
\[
\sum_{i\in[\jp..\jj]} q_i y_i = q_{\jp} \InParentheses{1-\prod_{i\in[\jp..\jj]}(1-y_i)}=\sum_{i\in[\jp..\jj]} q_i' y_i'. 
\]

In the interval $i\in[\jp,\jj]$, we notice that by increasing $p_{\jp}$ and decreasing $p_{\jj}$ we increase $q_{\jp}y_{\jp}$ and decrease each $q_i y_i$ for $i\in(\jp..\jj)$, since each $q_i$ deceases. Moreover, 
\begin{multline*}
q_{\jj}' y_{\jj}' = \left[ \prod_{\substack{j < \jj,\\ j\neq \jp}} (1-y_j)  \right] \cdot(1-y_{\jp}') \cdot y_{\jj}'
= \left[ \prod_{\substack{j < \jj,\\ j\neq \jp}} (1-y_j)\right] \cdot(1-y_{\jp}'-(1-y_{\jj}')(1-y_{\jp}'))\\
< \left[ \prod_{\substack{j < \jj,\\ j\neq \jp}} (1-y_j)\right] \cdot(1-y_{\jp}-(1-y_{\jj})(1-y_{\jp}))
=q_{\jj} y_{\jj},
\end{multline*}
since $y_{\jp}'>y_{\jp}$ while $(1-y_{\jp}')(1-y_{\jj}')=(1-y_{\jp})(1-y_{\jj})$. Hence, due to convexity of the function $\exp(-z)$, we conclude that the objective $\sum_{i} (1-\exp(-\frac{q_i y_i}{x}))$ decreases when we substitute $(\vect{q}, \vect{y})$ with $(\vect{q'},\vect{y'})$. Indeed, $q_{\jp}y_{\jp}$, the largest number among $\{q_i y_i\}_{i=\jp}^{\jj}$, increases, while all other affected $q_i y_i$ in $[\jp,\jj]$ decrease.

\paragraph{Case 2 ($p_{\jj-1}=1$).} Now we consider a degenerate case when $p_{\jj-1}=1$. 
In this case, $p_\jj$ only appears in the critical constraint for $S=\{\jj,\jj+1,\ldots,k\}$, which we assume to be not tight. Note that $p_{\jj}>p_{\jj+1}$ if $\jj<k$ by Claim~\ref{cl:pi_equals_piplus} and also that $p_{\jj}>0$ if $\jj=k$ (otherwise, we can decrease $k$ in \eqref{eq:opt_general}). Thus, sightly decreasing $p_\jj$ shall not violate any constraint. Furthermore, since $\jj > \ell$, $p_\jj < cx$, we can also slightly increase $p_\jj$ without violating any constraints. 
Now, we fix all $p_{i}$ for $i \ne \jj$ and consider $y_\jj=p_\jj$ as a locally free variable that we can slight increase or decrease. We study the objective of \eqref{eq:opt_general} as a function of $y_{\jj}=p_{\jj}$.

Observe that any change to $y_{\jj}$ only affects the terms $(1-e^{-\frac{q_i y_i}{x}})$ for $i \ge \jj$. Furthermore, for $i = \jj$, $(1 - e^{-\frac{q_\jj y_\jj}{x}})$ is a strictly concave function of $y_{\jj}$; and for $i > \jj$
\[
1 - \exp\left(-\frac{q_i y_i}{x}\right) = 1 - \exp\left( - (1- y_\jj) \cdot \frac{ \prod_{j<i, j\ne \jj} (1-y_j) \cdot y_i}{x}\right)
\] 
is also a concave function of $y_\jj$.

Thus, the objective function of \eqref{eq:opt_general} is a strictly concave function of $y_\jj = p_\jj$, at least in some neighborhood of $p_\jj$. Note that the minimum of a strictly concave function is always achieved on the boundary of its domain. Therefore, some small perturbation of $p_\jj$ and consequently $y_{\jj}=p_{\jj}$ (either slightly increase or decrease $p_{\jj}$ such that all constraint in \eqref{eq:opt_general} are still satisfied) would strictly decrease the objective. This contradicts the optimality of $\vect{p}$.
\end{proof}

Now we can write explicit formula for $p_i$ for $i > \ell$. By (iii) of Lemma~\ref{lem:rone_worst_case}, we have $\prod_{j=i+1}^k (1-p_j) = 1-(k-i)x$ and $\prod_{j=i}^k (1-p_j) = 1-(k-i+1)x$. Thus, if $i > \ell$, then $y_i=p_i = \frac{x}{1-(k-i)x}$ and
\begin{align*}
q_i = \prod_{j < i}(1-y_j) = (1-cx)^{\ell} \cdot \prod_{j=\ell+1}^{i-1} (1-p_i) = & (1-cx)^{\ell} \cdot \frac{\prod_{j=\ell+1}^{k} (1-p_i) }{\prod_{j=i}^{k} (1-p_i)} \\
= & (1-cx)^\ell \cdot \frac{1-(k-\ell)x}{1-(k-i+1)x}.
\end{align*}
Let $t =(k-\ell) x$ and  $s = \ell x$ for notation simplicity. We have, for small $x$
\begin{align*}
\frac{q_i y_i}{x} = \begin{cases}
\frac{1-e^{-cx}}{x}(1-cx)^{i-1} \approx  c\cdot e^{-c\cdot(i-1)x} , & i \le \ell \\
(1-cx)^\ell \cdot \frac{(1-t)}{(1-(k-i)x)\cdot(1-(k-i+1)x)}\approx e^{-c\cdot s}\frac{(1-t)}{(1-t+(i-\ell)x)^2} , & i > \ell
\end{cases}
\end{align*}
Consequently, we have that for small $x$
\begin{align*}
\sum_{i=1}^{k} x \cdot \left(1-e^{-\frac{q_i y_i}{x}} \right) = & \sum_{i=1}^{\ell} x \cdot\left( 1-e^{-\frac{q_i y_i}{x}} \right) +  \sum_{i=\ell+1}^{k} x \cdot \left( 1-e^{-\frac{q_i y_i}{x}} \right) \\
= & \sum_{i=1}^{\ell} x \cdot \left( 1-e^{-c\cdot e^{-c\cdot(i-1)x}} \right) +  \sum_{i=\ell+1}^{k} x \cdot \left( 1-e^{-e^{-c\cdot s}\frac{(1-t)}{(1-t+(i-\ell)x)^2}} \right) \\
\ge & \int_0^{s} 1-e^{-ce^{-cz}} \dd z + \int_{0}^{t} 1-e^{-e^{-cs} \cdot \frac{1-t}{(1-t+z)^2}} \dd z
\end{align*}
Furthermore, since $ \frac{x}{1-t}\approx \frac{x}{1-(k-\ell-1)x} = p_{\ell+1}  \le 1-e^{-cx}\approx cx$, we have $t \le 1-\frac{1}{c}$. 

To finish the proof, it suffices to lower bound the following function
\[
h_2(s,t) \eqdef \left( \int_0^{s} 1-e^{-ce^{-cz}} \dd z + \int_{0}^{t} 1-e^{-e^{-cs} \cdot \frac{1-t}{(1-t+z)^2}} \dd z \right) / \left(s+t \right),
\]
subject to $s \in [0, 1], t \in \left[0, 1 - \frac{1}{c}\right], s + t \in (0, 1]$.

We use numerical methods to show $h_2(s, t) \geq h_2\left(\frac{1}{c}, 1 - \frac{1}{c}\right) > 0.503$ when $c = 1.7$. The details are in Appendix~\ref{sec:appendix}.

\section{Improved Analysis: Regular Graphs}
\label{sec:improved}
In this section, we prove a stronger performance guarantee of \textsc{Greedy} for regular graphs. According to the definition of log-normalized regular graph, each edge with log-normalized weight $w_e$ can be substituted by a set of consecutive ``small'' edges with the same total log-normalized weight.
For the ease of presentation, we assume that all edges are infinitesimal within this section. 
Let $x_e = \frac{w_e}{c}$ and $y_e = 1-e^{-w_e}$ for all $e \in E$ as defined in Section~\ref{sec:warm_up}.

Define the following two types of contributions of each edge $e \in E_{vu}$, where $u \in L$ and $v \in R$:
\begin{align*}
& \rone(e) \eqdef x_{e} \cdot \left( 1 - \exp \left( - \frac{q_u(e)\cdot y_e}{x_e} \right) \right); \\
\text{and } & \rtwo(e) \eqdef y_{e} \cdot \left( \Pr{\unmatch{u}{e}} - \Pr{\free{u}{e}}  \right).
\end{align*}

In Section~\ref{sec:warm_up}, we estimated performance of \textsc{Greedy} with a lower bound of $\sum_{e \in E} \rone(e)$. Recall that this bound corresponds to the event that edge $e$ is the first realized edge of $u$. It turns out that we can add an extra term $\rtwo(e)$ on top of $\rone(e)$ to have a better bound on the probability that edge $e$ is matched. The term $\rtwo(e)$ corresponds to the event that $e$ is not the first realized edge of $u$, but $u$ is still unmatched before $e$. Formally, we have the following Lemma~\ref{lem:greedy_analysis}, where coefficient $e^{-c-ce^{-c}}$ in front of $\rtwo(e)$ ensures that the event from which we get extra gain is disjoint with the events from which we obtain the contribution of the first kind. I.e., we avoid double counting.

Within this section, we use computer assisted calculations in several places. We state all our lemmas in the case when $c=2$ to highlight the improvement of our analysis over the competitive ratio of $0.532$. We remark that our analysis generalizes for a wide range of the parameter $c$. We defer the proofs of Lemmas~\ref{lem:greedy_analysis} and \ref{lem:rtwo_opt} to Subsection~\ref{subsec:improved_analysis} and Subsection~\ref{subsec:opt_rtwo} respectively.

\begin{lemma}
	\label{lem:greedy_analysis}
When $c=2$,
\begin{equation}
\alg \ge \sum_{e \in E} \left( \rone(e) + e^{-c-ce^{-c}} \cdot \rtwo(e) \right).
\end{equation}	
\end{lemma}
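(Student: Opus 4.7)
The plan is to establish the inequality vertex-by-vertex on the right side and then sum over $v \in R$. Concretely, for each $v \in R$ I aim to show
\[
\Pr{v \text{ is matched}} \ge \sum_{e \in E_v}\InParentheses{\rone(e) + e^{-c-ce^{-c}}\cdot\rtwo(e)},
\]
which, after summing over $v$, gives the lemma since each edge belongs to exactly one $E_v$.

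First I will give a probabilistic interpretation of $\rtwo$: for $e=(u,v)$, define
\[
B_e \eqdef \exists e \cap \unmatch{u}{e}\cap\overline{\free{u}{e}}.
\]
The event $\exists e$ depends only on edge $e$ itself, while $\unmatch{u}{e}$ and $\free{u}{e}$ depend only on earlier arrivals, so $\exists e$ is independent of the remainder. Moreover $\free{u}{e}\subseteq\unmatch{u}{e}$ (if $u$ has seen no realized edge, it cannot be matched), hence $\Pr{B_e} = y_e\cdot(\Pr{\unmatch{u}{e}}-\Pr{\free{u}{e}}) = \rtwo(e)$. A direct check shows $B_e$ implies $v$ is matched by \textsc{Greedy}: under $B_e$, when $e$ arrives $u$ is free, so either $v$ was already matched, or \textsc{Greedy} commits $(u,v)$ to the matching. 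Together with the events $\firste{u}{e}$ from Lemma~\ref{lem:c_regular} (which already imply $v$ is matched), this gives
\[
\Pr{v\text{ is matched}} \ge \Pr{\InParentheses{\bigcup_{u\in L}\bigcup_{e\in E_{uv}}\firste{u}{e}}\cup \InParentheses{\bigcup_{e\in E_v} B_e}}.
\]
Writing $A_v$ for the first union, I split the right-hand side as $\Pr{A_v} + \Pr{(\bigcup_{e\in E_v} B_e)\setminus A_v}$; the first term already reproduces $\sum_{e\in E_v}\rone(e)$ by exactly the computation in the proof of Lemma~\ref{lem:c_regular}.

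The technical heart of the argument is then the lower bound
\[
\Pr{\InParentheses{\bigcup_{e\in E_v} B_e}\setminus A_v} \ge e^{-c-ce^{-c}}\cdot\sum_{e\in E_v}\rtwo(e).
\]
Working with the infinitesimal-edge convention, the $B_e$ over $e \in E_v$ pairwise overlap with probability of second order and their contributions essentially add; the loss from carving out $A_v$ is captured by a multiplicative factor of $\Pr{\overline{A_v}\mid B_e}$, uniformly bounded below by the constant $\alpha \eqdef e^{-c-ce^{-c}}$. The two exponentials are natural: $e^{-c}$ is a worst-case bound on the probability that a single left-neighbour has no realized edge (the saturating value of $e^{-cz}$ at $z=1$ in the warm-up integrand), and the nested $e^{-ce^{-c}}$ arises when estimating $\exp\InParentheses{-\sum_u q_u(e)y_e}$ at that saturated regime, matching the integrand $1-e^{-ce^{-cz}}$ from Lemma~\ref{lem:opt_regular} evaluated at the boundary.

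The main obstacle will be controlling the dependence between $\unmatch{u}{e}$ and the right-side events in $A_v$: unlike $\free{u}{e}$ (which is a function of edges in $E_u$ only), $\unmatch{u}{e}$ reaches out through previously matched right-vertices to essentially the whole history. I expect to localize this dependence by exploiting the bipartite, triangle-free structure highlighted in the footnote of Lemma~\ref{lem:c_regular}: conditional on the realizations within $E_u \cup E_v$, the remaining randomness decouples, and only a single product of the form $\prod_{u'\ne u}\InParentheses{1-\sum_{e'\in E_{u'v}} q_{u'}(e')y_{e'}}$ needs to be multiplied in. Applying the same $1-z\le e^{-z}$ and Jensen concavity steps as in Lemma~\ref{lem:c_regular}, this product is bounded below by $\alpha$ in the worst case. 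Summing the resulting per-vertex bound over $v \in R$ produces the advertised inequality.
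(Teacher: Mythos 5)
Your opening steps are on the right track: the identity $\Pr{B_e}=\rtwo(e)$ for $B_e=\exists e\cap\unmatch{u}{e}\cap\overline{\free{u}{e}}$, the observation that $B_e$ forces $v$ to be matched, and the recovery of $\sum_{e\in E_v}\rone(e)$ from $\Pr{A_v}$ all match the paper's starting point. However, the ``technical heart'' of your plan is not how the paper proceeds, and as stated it contains two genuine gaps.

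First, the constant $e^{-c-ce^{-c}}$ is not a conditional probability of the form $\Pr{\overline{A_v}\mid B_e}$, and the paper never tries to bound it that way. What actually happens is a two-stage argument. The paper first reduces (Lemma~\ref{lem:batched_order}) to the case where all parallel edges in each $E_{uv}$ arrive consecutively, and then works entirely with these \emph{batched} edges. For batched edges the relevant second-order event attached to the batch $(vu_i)$ is not $B_e\setminus A_v$ but the much more restrictive $\unmatch{u_i}{v}\setminus\free{u_i}{v}\cap\exists(vu_i)\cap_{j\ne i}\nexists(vu_j)$: \emph{no other} batched edge at $v$ is realized. These events are automatically pairwise disjoint and disjoint from $\bigcup_i\firste{u_i}{v}$, and the probability of the extra conditioning is exactly $\prod_{j\ne i}(1-y_{vu_j})=e^{-c}/(1-y_{vu_i})$ by regularity (Lemma~\ref{lem:rtwo}). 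That is where the single factor $e^{-c}$ comes from. The remaining factor $e^{-ce^{-c}}$ is not a probability at all; it is a numerical trade-off constant appearing in the subdivision Lemma~\ref{lem:subdivision_II}, which converts the batched bound $\rone(vu)+e^{-c}\rtwo(vu)$ back to the per-edge bound $\sum_{e\in E_{vu}}(\rone(e)+e^{-c-ce^{-c}}\rtwo(e))$ by showing (via Claim~\ref{clm:h3}) that the surplus $\rone(vu)-\sum_e\rone(e)\ge 0$ absorbs the surplus $e^{-c-ce^{-c}}(\sum_e\rtwo(e)-\rtwo(vu))$. Your narrative gives $e^{-ce^{-c}}$ a probabilistic reading it does not have.

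Second, your handling of overlaps and of the dependence of $\unmatch{u}{e}$ is insufficient. You assert that the $B_e$ over $e\in E_v$ overlap only to second order, but with infinitesimal edges you are summing over a continuum; each pairwise overlap is $O(y_e y_{e'})$, yet the sum over all pairs is $\Theta((\sum_e y_e)^2)$, which is not small. The paper avoids this entirely by making its second-term events disjoint by construction, paying the $e^{-c}$ factor up front. Likewise, $\unmatch{u}{e}$ depends on the whole arrival history, not only on $E_u\cup E_v$, so the decoupling you hope for does not follow from bipartiteness alone. The paper needs the nontrivial monotonicity Lemma~\ref{lem:conditional_unmatch} ($\Pr{\overline{M_u}(e)\mid Q_v(e)}\ge\Pr{\overline{M_u}(e)}$), proved by an alternating-path argument, to replace the awkward conditional probability by the unconditional one. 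Without an analogue of that lemma and without the batching/subdivision structure, the bound you are aiming for does not go through.
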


By Lemma~\ref{lem:opt_regular}, the $\rone(e)$ term alone is sufficient to show that \textsc{Greedy} is $0.532$-competitive for $2$-regular graphs. Next, we study the $\rtwo(e)$ term.

Let $\delta_u=\Pr{\overline{M_{u}}}-\Pr{Q_{u}}$  for each vertex $u\in L$ at the end of algorithm's execution. Note that $\Pr{\overline{M_{u}}(e)} \ge \Pr{Q_{u}(e)}$ for all $e\in E_u$, because $u$ cannot be matched at the moment of edge $e$ arrival, if $u$ had no realized edges. Thus $\delta_u\ge 0$.
Similar to Lemma~\ref{lem:c_regular}, we fix a vertex $u\in L$ and study the sum of $\rtwo(e)$ for all edges $e\in E_u$.
\begin{lemma}
	\label{lem:rtwo_opt}
	For any vertex $u\in L$, when $c=2$,
	\[
	\sum_{e\in E_u}\rtwo(e) =\sum_{e\in E_u}  y_{e} \cdot \InParentheses{\Pr{\overline{M_{u}}(e)} - \Pr{Q_{u}(e)}}\ge
	1.98 \cdot \delta_u^2.
	\]
\end{lemma}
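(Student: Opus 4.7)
The plan is to exploit the infinitesimal-edge assumption of this section to convert the sum into a continuous integral, derive differential constraints from Greedy's rule, and reduce the statement to a one-dimensional constrained optimization on the trajectory of $u$'s unmatched probability.

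First, parametrize the edges incident to $u$ by $t \in [0, c]$, where $t$ is the log-normalized cumulative weight of edges of $u$ that have already arrived. Define
\[ A(t) \eqdef \Pr{\unmatch{u}{t}}, \qquad B(t) \eqdef \Pr{\free{u}{t}} = e^{-t}, \qquad f(t) \eqdef A(t) - B(t) \ge 0. \]
In the infinitesimal limit, the left-hand side of the lemma becomes $\int_0^c f(t)\,\dd t$, with boundary conditions $f(0) = 0$ and $f(c) = \delta_u$.

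Second, let $v(t)$ denote the other endpoint of the edge arriving at time $t$. The Greedy rule gives $-A'(t) = m(t) \eqdef \Pr{\unmatch{u}{t} \cap \unmatch{v(t)}{t}}$, so that $f'(t) = e^{-t} - m(t)$. Elementary bounds $0 \le m(t) \le A(t) = f(t) + e^{-t}$ translate into the ODE inequalities $-f(t) \le f'(t) \le e^{-t}$. Combining $f'(t) \le e^{-t}$ with $f(c) = \delta_u$ and $f \ge 0$, one obtains the pointwise lower bound $f(t) \ge \max(0, \delta_u + e^{-c} - e^{-t})$, and integration over $[\,{-}\ln(\delta_u + e^{-c}),\, c\,]$ yields
\[ \int_0^c f(t)\,\dd t \;\ge\; \bigl(c + \ln(\delta_u + e^{-c})\bigr)(\delta_u + e^{-c}) - \delta_u. \]
For $c = 2$, a Taylor expansion at $\delta_u = 0$ shows the right-hand side is $\approx (e^c/2)\,\delta_u^2 \approx 3.69\,\delta_u^2$, comfortably above $1.98\,\delta_u^2$, so the simple pointwise estimate already settles the small-$\delta_u$ regime.

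The main obstacle is that this pointwise-plus-integrate argument weakens as $\delta_u$ approaches $1 - e^{-c}$: the worst-case trajectory for the simple bound has $m(t) = e^{-t}$ on $[0, t^*]$ with $t^* = -\ln(\delta_u + e^{-c})$, which effectively forces $v(t)$ to be unmatched whenever $u$ is unmatched, an extreme correlation not achievable in a $c$-regular graph where $v(t)$'s own unmatched probability obeys symmetric bounds. To close this gap, the proof needs to refine the upper bound on $m(t)$ by incorporating the analogous structural inequalities that $v(t)$ must satisfy in a $2$-regular instance, and then verify numerically that the resulting tightened minimization of $\int_0^c f(t)\,\dd t$ over admissible trajectories exceeds $1.98\,\delta_u^2$ uniformly in $\delta_u \in [0, 1 - e^{-c}]$, in the same computer-aided style used for Lemma~\ref{lem:opt_regular} and Theorem~\ref{thm:general}.
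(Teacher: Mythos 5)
Your setup is sound and your simple bound is a valid first step, but there is a genuine gap, and your proposed way to close it does not match what is actually needed. You correctly compute that $f'(t)\le e^{-t}$ together with $f(c)=\delta_u$ gives $f(t)\ge\max(0,\,\delta_u+e^{-c}-e^{-t})$ and hence $\int_0^c f\ge(c+\ln(\delta_u+e^{-c}))(\delta_u+e^{-c})-\delta_u$, and you correctly note this falls below $1.98\,\delta_u^2$ for $\delta_u$ near $1-e^{-c}$ (at $c=2,\ \delta_u\approx 0.865$ it gives about $1.14$ versus the required $\approx 1.48$). But what closes the gap in the paper is not a refined \emph{upper} bound on the decay rate $m(t)=-A'(t)$; it is a \emph{lower} bound. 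The paper invokes the positive-correlation inequality of Lemma~\ref{lem:conditional_unmatch} (for any $e=(u,v)$, $\Pr{\overline{M_u}(e)\mid Q_v(e)}\ge\Pr{\overline{M_u}(e)}$) together with $\Pr{Q_v(e)}\ge e^{-c}$ to conclude
\[
-A'(t)\;=\;\Pr{\overline{M_u}(t)\cap\overline{M_{v(t)}}(t)}\;\ge\;\Pr{\overline{M_u}(t)\cap Q_{v(t)}(t)}\;\ge\;A(t)\,e^{-c},
\]
i.e.\ $-\frac{\dd}{\dd t}\ln A(t)\ge e^{-c}$, which with $A(c)=e^{-c}+\delta_u$ yields the exponential lower envelope $A(t)\ge(e^{-c}+\delta_u)\,e^{(c-t)e^{-c}}$. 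Integrating $\bigl(A(t)-e^{-t}\bigr)^{+}$ against this envelope gives a closed-form expression in $\delta_u$ that is then verified numerically to dominate $1.98\,\delta_u^2$ on $[0,1-e^{-c}]$.

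Your diagnosis that the adversarial trajectory with $m(t)=e^{-t}$ encodes an unrealizable perfect correlation between $u$ and $v(t)$ is pointing at the right phenomenon, but the remedy you sketch --- constraining $\Pr{\overline{M_{v(t)}}(t)}$ by ``symmetric bounds'' for $v(t)$ --- is both imprecise and aimed in the wrong direction: an upper bound on $m(t)$ constrains how \emph{slowly} $A$ decays, which (with $A(c)$ pinned) pushes $f(t)$ down for $t<c$, not up. The extra leverage has to come from below, exactly as the paper's decay-rate bound provides. Without Lemma~\ref{lem:conditional_unmatch} (or some equivalent monotonicity/correlation fact about \textsc{Greedy} under vertex deletion) your optimization over admissible trajectories is genuinely underconstrained for large $\delta_u$, so the proof as proposed cannot be completed.
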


\begin{theorem}
	\textsc{Greedy} is $0.552$-competitive on $2$-regular graphs.
\end{theorem}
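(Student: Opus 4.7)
\medskip
\noindent\textbf{Proof plan.}
The plan is to combine the two lower bounds available on $\alg$ and balance them. The first bound is a simple identity. Since $Q_u$ (the event that no edge of $E_u$ is ever realized) happens with probability $e^{-c}$ on a log-normalized $c$-regular graph, and since $\overline{M_u} \supseteq Q_u$ at the end of the algorithm, we have $\Pr{u \text{ matched}} = 1 - \Pr{\overline{M_u}} = 1 - e^{-c} - \delta_u$. Summing over $u \in L$ yields
\[
\alg \;=\; \sum_{u \in L}\Pr{u \text{ matched}} \;=\; |L|(1 - e^{-c}) - \sum_{u \in L} \delta_u.
\]
So large average $\delta_u$ directly hurts $\alg$.

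The second bound is Lemma~\ref{lem:greedy_analysis} combined with the per-vertex estimates. From Lemma~\ref{lem:opt_regular} applied with $c=2$ we get $\sum_{e\in E_u}\rone(e) \ge h_1(2) \approx 0.532$. From Lemma~\ref{lem:rtwo_opt} we get $\sum_{e\in E_u}\rtwo(e) \ge 1.98\,\delta_u^2$. Hence
\[
\alg \;\ge\; \sum_{u \in L}\Bigl(h_1(2) \;+\; e^{-2 - 2e^{-2}}\cdot 1.98\,\delta_u^2\Bigr).
\]
Applying Jensen's inequality (equivalently Cauchy--Schwarz) to the convex function $x \mapsto x^2$ gives $\sum_u \delta_u^2 \ge |L|\,D^2$, where $D \eqdef \tfrac{1}{|L|}\sum_u \delta_u$ is the average gap. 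Dividing by $|L|$ and using $\opt \le |L|$ (the standard benchmark for regular bipartite graphs), the two bounds become
\[
\frac{\alg}{\opt} \;\ge\; \max\!\Bigl(\, 1 - e^{-2} - D,\ \ h_1(2) + 1.98\,e^{-2-2e^{-2}}\,D^2 \,\Bigr).
\]

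The final step is to observe that the left-hand expression is linearly decreasing in $D$ while the right-hand expression is quadratically increasing in $D$, so their pointwise maximum is minimized where they are equal. Plugging in $c = 2$ (so $1 - e^{-2} \approx 0.8647$, $h_1(2) \ge 0.532$, and $e^{-2-2e^{-2}} \approx 0.1033$), the two branches cross at $D^\star \approx 0.3125$, at which point both quantities evaluate to at least $0.552$. The main obstacle is simply checking this numerical balancing is tight enough, which reduces to verifying a single scalar inequality; no new structural argument is needed beyond the two lemmas and Jensen. Thus $\alg \ge 0.552\cdot\opt$, completing the proof.
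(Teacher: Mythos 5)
Your proposal is correct and follows essentially the same route as the paper: it combines the exact identity $\alg = |L|(1-e^{-2}) - \sum_u \delta_u$ with the lower bound from Lemmas~\ref{lem:greedy_analysis}, \ref{lem:opt_regular}, and \ref{lem:rtwo_opt}, applies Cauchy--Schwarz/Jensen to pass to the average gap, and balances the decreasing and increasing branches at their crossing point $D^\star \approx 0.312$. The paper phrases this as deducing the constraint $\Delta \le 0.312$ from the two expressions and substituting back, which is the same balancing argument in slightly different packaging.
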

\begin{proof}
	We have that 
	\[
	\alg = \sum_{u \in L} \Pr{M_u} = \sum_{u \in L} \left( 1 - \Pr{\overline{M_u}} \right) = \sum_{u \in L} \left( 1 - e^{-c} - \delta_u \right).
	\]
We recall definition of  $f(c) \eqdef \int_{0}^{1} \left( 1-e^{-ce^{-cz}} \right) \dd z$ from Section~\ref{sec:warm_up}. Then,
	\begin{align*}
	|L| \cdot ( 1 - e^{-c}) - \sum_{u \in L} \delta_u \ge & \sum_{e \in E} \left( \rone(e) + e^{-c-ce^{-c}} \cdot \rtwo(e) \right) \tag{by Lemma~\ref{lem:greedy_analysis}}\\
	\ge & \sum_{u \in L} \left( f(c) + e^{-c-ce^{-c}} \cdot 1.98 \cdot \delta_u^2 \right) \tag{by Lemma~\ref{lem:opt_regular}, \ref{lem:rtwo_opt}}\\
	\ge & f(c) \cdot |L| + 1.98\cdot e^{-c-ce^{-c}}  \cdot \frac{(\sum_{u \in L} \delta_u)^2}{|L|}. \tag{by Cauchy-Schwarz inequality}
	\end{align*}
	Let $\Delta = \frac{\sum_{u \in L} \delta_u}{|L|}$. We rearrange the above inequality and get the following for $c=2$.
	\[
	(1-e^{-2}-f(2)) \ge \Delta + 1.98\cdot e^{-2-2e^{-2}} \cdot \Delta^2.
	\]
	Solving the inequality numerically, we have that $\Delta \le 0.312$. Therefore, 
	\[
	\alg = (1-e^{-2}-\Delta) \cdot |L| \ge (1-e^{-2} -0.312) \cdot |L| \ge 0.552 \cdot |L|. \qedhere
	\]
\end{proof}

\subsection{Proof of Lemma~\ref{lem:greedy_analysis}}
\label{subsec:improved_analysis}
Despite the clean statement of the lemma the proof is quite technical. We give a brief outline before delving into the details of the formal proof.
\begin{enumerate}
\item Parallel edges may arrive at arbitrary times. As it turns out, the worst-case arrival order for \textsc{Greedy} is when edges arrive in batches, all edges parallel to an edge $e$ arrive sequentially in a single batch (Lemma~\ref{lem:batched_order}). Thus, we may consider all parallel edges in $E_{vu}$ as a single batched-edge $(vu)$.
\item As in Section~\ref{sec:warm_up}, we estimate the probability that $v \in R$ is matched. Recall that according to Lemma~\ref{lem:c_regular}, if there exists a batched-edge $(vu)$ such that $(vu)$ is the first realized edge of $u$, then $v$ must be matched by the algorithm. 
We try to strengthen the statement by weakening the condition to that $u$ is not matched before $(vu)$ and $(vu)$ is realized. The problem is that these events are no longer independent across different $u\in L$. To this end, we consider more complex events than  ``$u$ is not matched before $(vu)$ is realized'', which have smaller probability but are also guaranteed to be disjoint with each other and any events in $\rone(e)$ (\Cref{lem:prob_v_match,lem:rone,lem:conditional_unmatch,lem:rtwo}). 
\item The final step is a subdivision Lemma~\ref{lem:subdivision_II}. Note that the lemmas proved in the second step hold only for batched-edges. Informally, the subdivision lemma shows that the worst-case bound is achieved when every batch of parallel edges has only one small edge.
\end{enumerate}

Let $\alg_\sigma$ be the expected performance of \textsc{Greedy} with respect to arrival order $\sigma$ and $\alg_\sigma(F)$ be the performance of \textsc{Greedy} when $F$ is the set of realized edges.
Our first lemma shows that the worst-case order $\sigma$ of edge arrivals would put all parallel edges into consecutive batches.

\begin{lemma}
\label{lem:batched_order}
Let $\sigma$ be any arrival order and $e_1=e_2=(uv)$ be two parallel edges where $e_1$ arrives earlier than $e_2$. Let $\sigma_1$ and $\sigma_2$ be modified arrival orders $\sigma$: $e_1$ arrives at a later time right before $e_2$ in $\sigma_1$, and $e_2$ arrives at an earlier time right after $e_1$ in $\sigma_2$. Then $\alg_\sigma \ge \min \left( \alg_{\sigma_1}, \alg_{\sigma_2} \right)$.
\end{lemma}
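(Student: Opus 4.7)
The plan is to couple the three executions realization by realization. Write $\sigma = (A, e_1, B, e_2, C)$ where $A, B, C$ are the segments of $\sigma$ lying before $e_1$, between $e_1$ and $e_2$, and after $e_2$, respectively; then $\sigma_1 = (A, B, e_1, e_2, C)$ and $\sigma_2 = (A, e_1, e_2, B, C)$. For every realization $F \subseteq E$ I will show
\[
\alg_\sigma(F) = \alg_{\sigma_2}(F) \ \text{if } e_1 \in F, \qquad \alg_\sigma(F) = \alg_{\sigma_1}(F) \ \text{if } e_1 \notin F.
\]
The second identity is immediate, since both $\sigma$ and $\sigma_1$ degenerate to the order $(A, B, e_2, C)$ whenever $e_1$ is absent. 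For the first, I case on the state of $u, v$ right after $A$: if some endpoint of $(u,v)$ has already been matched, then $e_1$ and $e_2$ are both ignored in $\sigma$ as in $\sigma_2$ and the remaining $B, C$ are processed identically in either order; if $u, v$ are both free, then $e_1$ matches $(u, v)$ in both orderings, $e_2$ becomes ineffective, and $B$ is processed afterwards against the same ``both-matched'' state.

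Let $p = p_{e_1}$ and set $a_i \eqdef \E{\alg_{\sigma_i}(F) \mid e_1 \in F}$, $b_i \eqdef \E{\alg_{\sigma_i}(F) \mid e_1 \notin F}$ for $i \in \{1,2\}$. The coupling yields $\alg_\sigma = p\,a_2 + (1-p)\,b_1$, while $\alg_{\sigma_1} = p\,a_1 + (1-p)\,b_1$ and $\alg_{\sigma_2} = p\,a_2 + (1-p)\,b_2$ by the tower rule, so
\[
\alg_\sigma - \alg_{\sigma_1} = p(a_2 - a_1), \qquad \alg_\sigma - \alg_{\sigma_2} = (1-p)(b_1 - b_2).
\]
It now suffices to show that these two quantities have opposite signs (or both vanish), since then at least one is nonnegative, giving $\alg_\sigma \ge \min(\alg_{\sigma_1}, \alg_{\sigma_2})$.

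For the sign comparison I plan a second pointwise analysis, this time of $\alg_{\sigma_1}(F)$ versus $\alg_{\sigma_2}(F)$. These two orderings agree on $F$ unless both $u, v$ are free after $A$ \emph{and} at least one of $e_1, e_2$ is realized; in that ``active batch'' regime, the batch's only role in either ordering is to match $(u,v)$, so the difference $\Delta(F') \eqdef \alg_{\sigma_1}(F) - \alg_{\sigma_2}(F)$ depends only on the realizations $F' = F \setminus \{e_1, e_2\}$ of the other edges, not on which of $e_1, e_2$ triggered the batch. Conditioning on $e_1$ and using independence of $e_2$ from $F'$, this produces
\[
a_1 - a_2 = \E{\mathbf{1}[u,v\text{ free after }A]\cdot\Delta(F')}, \qquad b_1 - b_2 = p_{e_2} \cdot \E{\mathbf{1}[u,v\text{ free after }A]\cdot\Delta(F')},
\]
so $a_1 - a_2$ and $b_1 - b_2$ have the same sign and the desired opposite-sign property for $\alg_\sigma - \alg_{\sigma_1}$ and $\alg_\sigma - \alg_{\sigma_2}$ follows. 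The main obstacle I foresee is the pointwise case analysis underlying both couplings --- in particular the sub-cases where $u$ or $v$ has already been matched by an edge of $A$ or by an edge of $B$ in $\sigma_1$ --- after which the algebraic sign argument is a short step.
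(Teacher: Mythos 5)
Your proposal is correct: both couplings check out, the identities $\alg_\sigma = p\,a_2 + (1-p)\,b_1$ and the proportionality $b_1 - b_2 = p_{e_2}(a_1 - a_2)$ are right, and the opposite-sign conclusion does give $\alg_\sigma \ge \min(\alg_{\sigma_1}, \alg_{\sigma_2})$. The core idea is the same as the paper's --- condition on which of $e_1, e_2$ is realized and use that, once one edge of the batch exists (or one endpoint is already taken), the other edge is irrelevant to \textsc{Greedy}. Where you diverge is in the finishing step. The paper writes all three quantities over the common basis $A_1 = \Ex{\alg_\sigma \mid \exists e_1}$, $A_2 = \Ex{\alg_\sigma \mid \nexists e_1, \exists e_2}$, $A_3 = \Ex{\alg_\sigma \mid \nexists e_1, e_2}$, observes that $\alg_{\sigma_1}$ and $\alg_{\sigma_2}$ are each of the form $Y A_i + (1-Y)A_3$ with $Y = y_{e_1} + y_{e_2} - y_{e_1}y_{e_2}$, and reads off that $\alg_\sigma$ is literally a convex combination of $\alg_{\sigma_1}$ and $\alg_{\sigma_2}$ --- which immediately dominates the minimum. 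Your route instead establishes only the weaker ``one of the two differences is nonnegative,'' and to get there you need a second pointwise coupling of $\sigma_1$ against $\sigma_2$ and the function $\Delta(F')$. That extra machinery is sound but avoidable: having already shown $a_2 = A_1$ and $b_1 = y_{e_2}A_2 + (1-y_{e_2})A_3$, you are one line away from the convex-combination statement, which subsumes your sign analysis. So the proof stands as is, but the paper's conclusion is both stronger and shorter.
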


\begin{proof}
For arrival order $\sigma$, the performance of \textsc{Greedy} does not depend on the existence of $e_2$ if $e_1$ exists, since it either accepts $e_1$, or at least one of $u,v$ is already matched before $e_1$ arrives.
Let $A_1 = \Ex[F]{\alg_\sigma(F) | \exists e_1}$, $A_2 = \Ex[F]{\alg_\sigma(F) | \nexists e_1, \exists e_2}$ and $A_3 = \Ex[F]{\alg_\sigma(F) | \nexists e_1,e_2}$. Then
\[
\alg_\sigma = y_{e_1} A_1 + (1-y_{e_1})y_{e_2} A_2 + (1-y_{e_1})(1-y_{e_2}) A_3.
\]
Moreover, we have that
\begin{align*}
\alg_{\sigma_1} = & (y_{e_1}+y_{e_2}-y_{e_1}y_{e_2}) \Ex[F]{\alg_{\sigma_1}(F) | \exists e_1 \mbox{ or } \exists e_2} + (1-y_{e_1})(1-y_{e_2}) \Ex[F]{\alg_{\sigma_1}(F) | \nexists e_1,e_2} \\
= & (y_{e_1}+y_{e_2}-y_{e_1}y_{e_2}) A_1 + (1-y_{e_1})(1-y_{e_2}) A_3,
\end{align*}
and similarly,
\begin{align*}
\alg_{\sigma_2} = & (y_{e_1}+y_{e_2}-y_{e_1}y_{e_2}) \Ex[F]{\alg_{\sigma_2}(F) | \exists e_1 \mbox{ or } \exists e_2} + (1-y_{e_1})(1-y_{e_2}) \Ex{\alg_{\sigma_2}(F) | \nexists e_1,e_2} \\
= & (y_{e_1}+y_{e_2}-y_{e_1}y_{e_2}) A_2 + (1-y_{e_1})(1-y_{e_2}) A_3.
\end{align*}
To conclude the proof, observe that $\alg_\sigma$ is a convex combination of $\alg_{\sigma_1}$ and $\alg_{\sigma_2}$.
\end{proof}

From now on, we assume edges in $E_{uv}$ arrive consecutively for any pair of vertices $u,v$. For the ease of presentation, we also think of the edges in $E_{uv}$ as a single batched-edge and use $(uv)$ to denote it. We use $\exists (uv)$ to denote the event that this batched-edge is realized, i.e., at least one edge of $E_{uv}$ is realized. 

Similar to Section~\ref{sec:warm_up}, let $\free{u}{v}$, $\match{u}{v}$, and $\firste{u}{v}$ be respectively the events that $u$ has no incident realized edges before $(vu)$, $u$ is matched before arrival of $(vu)$, and the event that $(uv)$ is the first realized edge of vertex $v$. 
Let $v$ be any fixed vertex in $R$ and $u_1,u_2,\cdots u_n$ be the neighbors of $v$ in $L$ enumerated according to the arrival order of the edges $(vu_i)$.

\begin{lemma}
	\label{lem:prob_v_match}
\begin{equation}
\label{eq:greedy}
\Pr{M_v}\ge\Pr{\bigcup_{i=1}^n \firste{u_i}{v}}+
\sum_{i=1}^{n} \Pr{\unmatch{u_i}{v} \backslash \free{u_i}{v} \cap \exists(vu_i) \cap_{j \ne i} \nexists(vu_j)}.
\end{equation}
\end{lemma}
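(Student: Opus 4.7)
The plan is to realize the right-hand side as the probability of a disjoint union of events, each of which forces \textsc{Greedy} to match $v$. Denote the events in the second sum by $B_i \eqdef (\unmatch{u_i}{v}\setminus\free{u_i}{v})\cap \exists(vu_i) \cap_{j\ne i}\nexists(vu_j)$. I first argue that every $\firste{u_i}{v}$ and every $B_i$ implies that $v$ is eventually matched, and then verify that the events $\{B_i\}_{i\in[n]}$ are pairwise disjoint and that each $B_i$ is disjoint from $\bigcup_j \firste{u_j}{v}$. Combining these two steps gives the desired inequality by additivity of probability over a disjoint union that is contained in $M_v$.

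For the implication step, suppose $\firste{u_i}{v}$ occurs. When the batched edge $(vu_i)$ arrives, $u_i$ has had no realized incident edge yet, hence $u_i$ is unmatched, and at the same time some edge in $E_{u_iv}$ is realized. So either $v$ was already matched before this moment, or \textsc{Greedy} matches $v$ to $u_i$ on the first realized edge of the batch. Under $B_i$, the clause $\unmatch{u_i}{v}$ ensures $u_i$ is unmatched when $(vu_i)$ arrives, and $\exists(vu_i)$ that the batch is realized; the very same dichotomy forces $v$ to be matched. Once matched, $v$ remains matched until the end.

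For the disjointness step, fix $i\ne i'$: then $B_i$ requires $\nexists(vu_{i'})$ while $B_{i'}$ requires $\exists(vu_{i'})$, so $B_i\cap B_{i'}=\varnothing$. Next, $\firste{u_i}{v}\subseteq \free{u_i}{v}$ whereas $B_i\subseteq \overline{\free{u_i}{v}}$, hence $B_i\cap \firste{u_i}{v}=\varnothing$; and for $j\ne i$, $\firste{u_j}{v}\subseteq \exists(vu_j)$ whereas $B_i\subseteq \nexists(vu_j)$, hence $B_i\cap \firste{u_j}{v}=\varnothing$. It follows that the union $\bigl(\bigcup_i \firste{u_i}{v}\bigr)\cup \bigsqcup_i B_i$ has probability equal to $\Pr{\bigcup_i \firste{u_i}{v}}+\sum_i \Pr{B_i}$, which is exactly the RHS, and being contained in $M_v$ by the previous step it is bounded above by $\Pr{M_v}$.

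The only subtlety worth flagging is the interpretation of batched-edge events in light of Lemma~\ref{lem:batched_order}: $\match{u_i}{v}$, $\free{u_i}{v}$, and $\exists(vu_i)$ are all understood with respect to the whole batch $E_{u_iv}$ treated as a single arrival block, with $\exists(vu_i)$ meaning at least one edge of $E_{u_iv}$ is realized. Under these conventions the proof is essentially set-theoretic; there is no genuinely hard step. The one design choice to appreciate is the inclusion of the extra constraints $\nexists(vu_j)$ for $j\ne i$ inside $B_i$, which is precisely what buys both the mutual disjointness of the $B_i$'s and their disjointness from the first union, preventing any double-counting when we pass to the sum of probabilities.
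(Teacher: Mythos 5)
Your proof is correct and follows essentially the same route as the paper: identify events that force $v$ to be matched, then use the auxiliary constraints $\nexists(vu_j)$ for $j\ne i$ (together with $\firste{u_i}{v}\subseteq\free{u_i}{v}$) to make the $B_i$'s pairwise disjoint and disjoint from $\bigcup_j \firste{u_j}{v}$. The only cosmetic difference is that the paper first passes through the intermediate union $\bigcup_i(\unmatch{u_i}{v}\cap\exists(vu_i))$, whereas you go directly from the disjoint pieces to $M_v$.
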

The event in first term on the RHS \eqref{eq:greedy} is the same as in \eqref{eq:simple_bound} from Section~\ref{sec:warm_up}. The event in the second term describes a few conditions at the arrival of the batched-edge $(vu_i)$: (a) $u_i$ is not matched; (b) $(vu_i)$ is not the first realized edge incident to $u_i$; (c) $(vu_i)$ is the only realized edge incident to $v$.
\begin{proof}
First of all, we notice that if $u_i$ is unmatched before $(vu_i)$ and the edge $(vu_i)$ is realized, then $v$ is matched, i.e.,
\[
\left( \unmatch{u_i}{v}\cap \exists(vu_i) \right) \subseteq M_v, \forall i \in [n].
\]
Indeed, consider the moment when edge $(vu_i)$ arrives, since $u_i$ remains unmatched, either $v$ is matched before $(vu_i)$, or $v$ will be matched to $u_i$ at this point. Therefore,
\[
\Pr{M_v} \ge \Pr{\bigcup_{i=1}^n \left( \unmatch{u_i}{v}\cap \exists(vu_i) \right)}.
\]

Notice that $\free{u_i}{v} \subseteq \unmatch{u_i}{v}$. Thus $\free{u_i}{v}$ and $\unmatch{u_i}{v} \backslash \free{u_i}{v}$ partition the event $\unmatch{u_i}{v}$. Thus
\[
\bigcup_{i=1}^n \left( \unmatch{u_i}{v}\cap \exists(vu_i) \right)\supseteq\bigcup_{i=1}^n \left( \free{u_i}{v}\cap \exists(vu_i) \right)=
\bigcup_{i=1}^n \firste{u_i}{v}.
\]
For each $i\in[n]$, the event $\event_i\eqdef\exists(vu_i)\cap_{j \ne i} \nexists (vu_j)$ is disjoint from $\cup_{j \ne i}\firste{u_j}{v}$. Therefore, $\event_i\cap\unmatch{u_i}{v} \backslash \free{u_i}{v}$ is disjoint from $\bigcup_{i=1}^n \firste{u_i}{v}$. All $\{\event_i\}_{i=1}^{n}$ are also mutually disjoint. Hence,
\begin{multline*}
\Pr{M_v} \ge  \Pr{\bigcup_{i=1}^n \left( \unmatch{u_i}{v}\cap \exists(vu_i) \right)}
\ge\Pr{\bigcup_{i=1}^n \firste{u_i}{v}}+ \Pr{\bigcup_{i=1}^n
\left( \event_i \cap \unmatch{u_i}{v} \backslash \free{u_i}{v} \right)}\\
=\Pr{\bigcup_{i=1}^n \firste{u_i}{v}}+
\sum_{i=1}^{n} \Pr{\unmatch{u_i}{v} \backslash \free{u_i}{v} \cap \exists(vu_i) \cap_{j \ne i} \nexists(vu_j)}. \tag*{\qedhere}
\end{multline*}
%
\end{proof}

Next, we lower bound the two terms in the right-hand side of equation~\eqref{eq:greedy}. 
We use $w_{vu}$ to denote the log-normalized weight of the batched-edge $(vu)$, i.e., $w_{vu} = \sum_{e \in E_{vu}} w_e$. Similarly, we define $x_{vu} = \frac{w_{vu}}{c}$ and $y_{uv} = 1 - e^{-w_{vu}}$. Note that $y_{vu}$ equals the probability of $\exists (vu)$. 
The first lemma is similar to the analysis of Lemma~\ref{lem:c_regular}. However, it has a slightly more refined bound that we use in the subdivision lemma.

\begin{lemma}
	\label{lem:rone}
\begin{equation}
\Pr{\bigcup_{i=1}^n \firste{u_i}{v}} \ge \sum_{i=1}^{n} x_{vu_i} \cdot \left( 1- \left(1-q_{u_i}(v) \cdot y_{vu_i} \right)^{\frac{1}{x_{vu_i}}} \right).
\end{equation}
\end{lemma}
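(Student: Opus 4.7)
The plan is to mirror the computation in Lemma~\ref{lem:c_regular}, but work at the level of batched-edges $(vu_i)$ and replace the final exponential relaxation with a direct application of weighted AM--GM, which gives the tighter $(1-z)^{1/x}$ form in the statement.

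First I would observe that for each fixed $u_i$, the event $\firste{u_i}{v} = \free{u_i}{v} \cap \exists(vu_i)$ is the intersection of two events on disjoint sets of edge realizations (the batch $(vu_i) \subseteq E_{u_i}$ determines $\exists(vu_i)$, while $\free{u_i}{v}$ depends only on the batches $E_{u_iw}$ arriving strictly before $(vu_i)$). Hence
\begin{equation*}
\Pr{\firste{u_i}{v}} \;=\; q_{u_i}(v)\cdot y_{vu_i}.
\end{equation*}
Next, because $G$ is bipartite the events $\{\firste{u_i}{v}\}_{i\in[n]}$ depend on pairwise disjoint collections of random edges in $E_{u_1},\dots,E_{u_n}$, and therefore are mutually independent. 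Consequently
\begin{equation*}
\Pr{\bigcup_{i=1}^n \firste{u_i}{v}} \;=\; 1 - \prod_{i=1}^n\bigl(1 - q_{u_i}(v)\cdot y_{vu_i}\bigr).
\end{equation*}

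The remaining task is purely analytic: show that for weights $x_{vu_i}$ with $\sum_i x_{vu_i}=1$ (which holds because $v \in R$ has total log-normalized degree $c$, so $\sum_i w_{vu_i}=c$) and arbitrary $z_i \in [0,1]$,
\begin{equation*}
1 - \prod_{i=1}^n(1-z_i) \;\ge\; \sum_{i=1}^n x_{vu_i}\left(1-(1-z_i)^{1/x_{vu_i}}\right).
\end{equation*}
Using $\sum_i x_{vu_i}=1$, this rearranges to
\begin{equation*}
\prod_{i=1}^n(1-z_i) \;\le\; \sum_{i=1}^n x_{vu_i}\,(1-z_i)^{1/x_{vu_i}},
\end{equation*}
which is exactly the weighted AM--GM inequality applied to $a_i = (1-z_i)^{1/x_{vu_i}}$ with weights $x_{vu_i}$, since $\prod_i a_i^{x_{vu_i}} = \prod_i (1-z_i)$. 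Substituting $z_i = q_{u_i}(v)\cdot y_{vu_i}$ yields the claim.

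I do not foresee a serious obstacle: the only subtlety is verifying that the independence across $u_i$ still goes through for batched-edges, which follows from the fact that distinct batches $(vu_i)$ draw their internal randomness from disjoint edge sets on the $L$-side, and the events $\free{u_i}{v}$ only involve edges incident to $u_i$ that arrive earlier. The weighted AM--GM step is the reason we obtain the sharper $(1-z)^{1/x}$ bound rather than the looser $1-e^{-z/x}$ bound of Lemma~\ref{lem:c_regular}, which will be useful downstream in the subdivision argument of Lemma~\ref{lem:subdivision_II}.
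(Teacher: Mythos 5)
Your proof is correct and takes essentially the same route as the paper: both establish mutual independence of the events $\firste{u_i}{v}$ to get $1-\prod_i(1-q_{u_i}(v)\,y_{vu_i})$, then apply the same convexity inequality (the paper phrases it as Jensen's inequality for the concave function $1-e^{-z}$; you phrase it as weighted AM--GM, which is the identical statement after the substitution $a_i=(1-z_i)^{1/x_{vu_i}}$). The only cosmetic difference is which named inequality is invoked; the computation and the point where $\sum_i x_{vu_i}=1$ is used are the same.
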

\begin{proof}
Notice that 1) the event $\firste{u_i}{v}=\free{u_i}{v} \cap \exists(vu_i)$ only the depends on the random realizations of edges incident to $u_i$; 2) the edges incident to $u_i$ are disjoint with the edges incident to $u_j$ for any $i\ne j$. That is, the events $\firste{u_i}{v}$ are independent. Hence,
\begin{multline*}
\Pr{\bigcup_{i=1}^n \firste{u_i}{v}} =  1 - \prod_{i=1}^{n} \left( 1 - \Pr{\firste{u_i}{v}} \right)
=1 - \prod_{i=1}^{n} (1-q_{u_i}(v) \cdot y_{vu_i})\\
=  1 - \exp\left(\sum_{i=1}^{n} \ln(1-q_{u_i}(v) \cdot y_{vu_i}) \right) 
= 1 - \exp\left( \sum_{i=1}^{n} x_{vu_i} \cdot \frac{\ln(1-q_{u_i}(v) \cdot y_{vu_i})}{x_{vu_i}} \right) \\
\ge  \sum_{i=1}^{n} x_{vu_i} \cdot \left(1-(1-q_{u_i}(v) \cdot y_{vu_i})^{\frac{1}{x_{vu_i}}} \right) \tag{by Jesnsen's inequality}
\end{multline*}
where we use the concavity of $1-\exp(-z)$ in the inequality.
\end{proof}

Before we give a lower bound for the second term of equation~\eqref{eq:greedy}, we observe the following useful property of \textsc{Greedy} in Lemma~\ref{lem:conditional_unmatch}. The actual bound appears in Lemma~\ref{lem:rtwo}. 
\begin{lemma}
	\label{lem:conditional_unmatch}
	For any vertices $v \in R, u \in L$ and edge $e=(uv)$, $\Pr{\overline{M_u}(e) | Q_{v}(e)} \ge \Pr{\overline{M_u}(e)}$.
\end{lemma}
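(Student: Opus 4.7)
The plan is to prove the lemma via a coupling of two greedy processes and a parity argument on alternating paths. Conditioning on $Q_v(e)$ is equivalent to running \textsc{Greedy} on a modified realization in which every edge of $E_v$ arriving before $e$ is treated as unrealized; call this the $(G-E_v)$-run and the unconditional process the $G$-run, and couple them so that all non-$E_v$ edges share the same realizations, while $E_v$ edges are randomly realized in the $G$-run and forced unrealized in the $(G-E_v)$-run. Under this coupling, the marginal of the $G$-run is the unconditional distribution and that of the $(G-E_v)$-run is the conditional distribution on $Q_v(e)$, so it suffices to prove the deterministic implication: whenever $u$ is matched just before $e$ in the $(G-E_v)$-run, $u$ is also matched in the $G$-run. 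Taking expectations then gives $\Pr{M_u(e) | Q_v(e)} \le \Pr{M_u(e)}$, which is the claim.

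For the deterministic implication, suppose toward a contradiction that $u$ is matched in the $(G-E_v)$-run via some edge $(u,w)\in M_{G-E_v}\setminus M_G$ but unmatched in the $G$-run. Consider $D = M_G \triangle M_{G-E_v}$, which decomposes into vertex-disjoint alternating paths and cycles. The key claim I need is that every connected component of $D$ contains at least one edge of $E_v$. To see this, start from any disagreement edge and trace backward in time: if $(a,b)\in M_G\setminus M_{G-E_v}$ is not itself in $E_v$, then its realization is identical in both runs, so the states at time $\tau(a,b)$ must differ, meaning some endpoint (say $a$) was taken earlier in the $(G-E_v)$-run by a match $(a,c)\in M_{G-E_v}\setminus M_G$; a symmetric argument handles the other orientation. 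Iterating produces a strictly earlier chain of $D$-edges each sharing a vertex with its predecessor, all within the same component, and since the chain must terminate and the only source of initial divergence between the two runs is an $E_v$-edge matched in $M_G$, it must terminate at such an edge. Because $v$ is matched at most once in $M_G$ and not at all in $M_{G-E_v}$, $v$ has degree at most one in $D$; thus $D$ is a single alternating path with $v$ as one endpoint, and $u$ must be the other endpoint.

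The contradiction now comes from parity. This alternating path from $u$ to $v$ lies in a bipartite graph with $u\in L$ and $v\in R$, so its number of edges is odd. Yet $u$'s unique $D$-edge is in $M_{G-E_v}$ (since $u$ is matched only in that matching) and $v$'s unique $D$-edge is in $M_G$, and alternating-path edges must alternate between the two matchings; starting with an $M_{G-E_v}$-edge and ending with an $M_G$-edge requires the number of edges to be even. This is the desired contradiction, so the assumption fails and the coupled implication holds.

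The main obstacle I expect is formalizing the backward-tracing step of the middle paragraph, i.e., making precise how greedy's state divergence propagates along vertex-adjacent $D$-edges. The cleanest way is by induction on greedy's timeline: the two coupled runs stay identical until the first $E_v$-edge is matched in the $G$-run, and from that point any fresh disagreement at time $\tau$ is caused by a divergent endpoint state, which in turn must have been created by an earlier divergent match, giving the required chain of predecessor $D$-edges inside the same component.
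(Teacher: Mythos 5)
Your proof is correct and follows essentially the same approach as the paper: a coupling argument that reduces the inequality to a deterministic claim about two greedy runs, the observation that the symmetric difference of the two resulting matchings is a single alternating path anchored at $v$, and a bipartite-parity argument showing that no $L$-vertex can switch from unmatched to matched when $v$'s edges are removed. The paper phrases the path structure more tersely (forward-traced from $v$, noting the edges along it are time-ordered) and leaves the parity step implicit, whereas you derive the single-path structure by backward tracing through the symmetric difference and spell out the parity contradiction explicitly; these are presentational differences only.
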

\begin{proof}
Fix the edge arrival order $\sigma$. We only consider the edges in $E$ arriving before $e$ in $\sigma$.  We claim that for any realization of $E_{\text{-}v}\eqdef E \backslash E_v$ and $E_v$, if $u$ is not matched by \textsc{Greedy}, then $u$ remains unmatched if we delete all edges in $E_v$ from the graph. 

Notice that removing $E_v$ is equivalent to deleting vertex $v$ from the graph. Fix realization of edges in $E$ and matching produced by \textsc{Greedy}. 
If we delete $v$, the change to the output matching can be represented as an alternating path starting from $v$ and alternating between edges of the original matching and the edges of the new matching (every edge on the alternating path must appear later in $\sigma$ than its predecessor on the path). We observe that no vertex in $L$ may change its status from unmatched to matched in the new matching. In particular, $u \in L$ must remain unmatched if it is not matched originally.
\end{proof}

\begin{lemma}
	\label{lem:rtwo}
\begin{equation}
\Pr{\unmatch{u_i}{v} \backslash \free{u_i}{v} \cap \exists(vu_i) \cap_{j \ne i} \nexists(vu_j)} \ge e^{-c} \cdot \frac{y_{vu_i}}{1-y_{vu_i}} \cdot \left( \Pr{\unmatch{u_i}{v}} - \Pr{\free{u_i}{v}}  \right).
\end{equation}
\end{lemma}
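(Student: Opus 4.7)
}
Let me set $A_i \eqdef \unmatch{u_i}{v}\setminus\free{u_i}{v}$, $B_i \eqdef \exists(vu_i)$, and $C_i \eqdef \bigcap_{j\ne i}\nexists(vu_j)$, so the event on the LHS is $A_i\cap B_i\cap C_i$. The plan is to peel off $B_i$ and $C_i$ via independence, use the $c$-regularity at $v$ to produce the $e^{-c}$ factor, and then apply Lemma~\ref{lem:conditional_unmatch} to the leftover conditional probability. The delicate point is that $A_i$ mixes state from both sides of $v$, so the independence reductions must be justified carefully using bipartiteness.

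First I will argue $B_i$ is independent of $A_i\cap C_i$: the event $A_i$ is measurable with respect to edges that arrive strictly before $(vu_i)$, while $C_i$ is measurable with respect to the edges $(vu_j)$ for $j\ne i$, and $B_i$ depends only on $(vu_i)$ itself. Hence
\[
\Pr{A_i\cap B_i\cap C_i}\;=\;y_{vu_i}\cdot\Pr{A_i\cap C_i}.
\]
Next, split $C_i = Q_v(vu_i)\cap D_i$ where $D_i\eqdef\bigcap_{j>i}\nexists(vu_j)$, using that the $(vu_j)$ arrive in the order of $j$. The event $D_i$ depends only on edges arriving strictly after $(vu_i)$, so it is independent of $A_i\cap Q_v(vu_i)$ (both measurable w.r.t.\ edges arriving before $(vu_i)$). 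Therefore
\[
\Pr{A_i\cap C_i}\;=\;\Pr{D_i}\cdot\Pr{A_i\cap Q_v(vu_i)}.
\]
By $c$-regularity at $v$, $\sum_j w_{vu_j}=c$, so $\prod_{j}(1-y_{vu_j})=e^{-c}$, and thus $\Pr{D_i}\cdot\Pr{Q_v(vu_i)}=\prod_{j\ne i}(1-y_{vu_j})=\frac{e^{-c}}{1-y_{vu_i}}$.

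Now I handle the conditional probability $\Pr{A_i\mid Q_v(vu_i)}$. Since $\free{u_i}{v}\subseteq\unmatch{u_i}{v}$,
\[
\Pr{A_i\mid Q_v(vu_i)}\;=\;\Pr{\unmatch{u_i}{v}\mid Q_v(vu_i)}\;-\;\Pr{\free{u_i}{v}\mid Q_v(vu_i)}.
\]
For the first term, Lemma~\ref{lem:conditional_unmatch} gives $\Pr{\unmatch{u_i}{v}\mid Q_v(vu_i)}\ge\Pr{\unmatch{u_i}{v}}$. The main subtlety is the second term: I claim $\free{u_i}{v}$ and $Q_v(vu_i)$ are \emph{independent}. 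Indeed, $\free{u_i}{v}$ is measurable w.r.t.\ the realizations of edges in $E_{u_i}$ that arrive before $(vu_i)$, which (since the parallel edges incident to both $u_i$ and $v$ are batched) are edges $(u_i,v')$ with $v'\in R\setminus\{v\}$. Meanwhile $Q_v(vu_i)$ is measurable w.r.t.\ the realizations of $(vu_j)$ for $j<i$, i.e.\ edges $(u_j,v)$ with $u_j\in L\setminus\{u_i\}$. In a bipartite graph these two edge families are disjoint, giving the claimed independence and hence $\Pr{\free{u_i}{v}\mid Q_v(vu_i)}=\Pr{\free{u_i}{v}}$. Combining everything,
\[
\Pr{A_i\cap B_i\cap C_i}\;\ge\;\frac{e^{-c}\,y_{vu_i}}{1-y_{vu_i}}\cdot\Bigl(\Pr{\unmatch{u_i}{v}}-\Pr{\free{u_i}{v}}\Bigr),
\]
which is the desired inequality.

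I expect the bookkeeping of which $\sigma$-algebra each event lives in to be the main obstacle: one has to distinguish between edges in $E_{u_i}$ arriving before $(vu_i)$, edges in $E_v$ arriving before $(vu_i)$, and the later $(vu_j)$ edges, and it is essential here that $G$ is bipartite (so that $E_{u_i}\cap E_v=E_{u_iv}$, which is entirely absorbed into the single batched edge $(vu_i)$). All remaining steps, including the two applications of independence and the single application of Lemma~\ref{lem:conditional_unmatch}, are then routine.
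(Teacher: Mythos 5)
Your proof is correct and follows essentially the same route as the paper: both arguments factor the left-hand event into the independent pieces $\exists(vu_i)$, the non-realization of the other $(vu_j)$, and the residual event conditioned on $\free{v}{u_i}$, then invoke Lemma~\ref{lem:conditional_unmatch} for the $\unmatch{u_i}{v}$ term, independence for the $\free{u_i}{v}$ term, and $c$-regularity at $v$ to obtain $e^{-c}/(1-y_{vu_i})$. Your version merely makes the measurability bookkeeping (which edge sets each event depends on) more explicit than the paper's one-line justification of step $(*)$.
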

\begin{proof} Recall the definition of event $\event_i=\exists(vu_i)\cap_{j \ne i} \nexists (vu_j)$. We need to give a lower bound on $\Pr{\event_i\cap\unmatch{u_i}{v} \backslash \free{u_i}{v} }$. We have
\begin{align*} 
 \Pr{\event_i\cap\unmatch{u_i}{v} \backslash \free{u_i}{v} } 
&=  \Pr{\event_i} \cdot \Pr{\unmatch{u_i}{v} \backslash \free{u_i}{v} ~\Big|~ \event_i}\\ 
&=  \Pr{\event_i} \cdot \left( \Pr{\unmatch{u_i}{v}~ \Big| ~\event_i} - \Pr{ \free{u_i}{v} ~\Big| ~\event_i} \right) \\
&=  \Pr{\event_i} \cdot \left( \Pr{\unmatch{u_i}{v} ~\Big|~ \free{v}{u_i}} - \Pr{ \free{u_i}{v}} \right) \tag{$*$} \\
&\ge  \Pr{\event_i} \cdot \left( \Pr{\unmatch{u_i}{v}} - \Pr{\free{u_i}{v}}  \right) 
\tag{by Lemma~\ref{lem:conditional_unmatch}} \\
&=  y_{uv_i} \cdot \prod_{j\ne i}(1-y_{vu_j}) \cdot \left( \Pr{\unmatch{u_i}{v}} - \Pr{\free{u_i}{v}}  \right) \\
&=  e^{-c} \cdot \frac{y_{vu_i}}{1-y_{vu_i}} \cdot \left( \Pr{\unmatch{u_i}{v}} - \Pr{\free{u_i}{v}}  \right).
\end{align*}

In equation $(*)$, we use that $\unmatch{u_i}{v}$ is independent of the realization of those edges that arrive after $(vu_i)$ and $\event_i$ is independent of $\free{u_i}{v}$.
In the last equation, we use that $\prod_{j} (1-y_{vu_j}) = \prod_{e\in E_v} (1-y_e) = e^{-c}$ by the $c$-regularity of the graph.
\end{proof}

For any pair of vertices $v \in R, u \in L$, define 
\begin{align*}
 \rone(vu) &\eqdef x_{vu} \cdot (1 - (1-q_u(v) \cdot y_{vu})^{\frac{1}{x_{vu}}}); \\
\text{and }  \rtwo(vu) &\eqdef \frac{y_{vu}}{1-y_{vu}} \cdot \left( \Pr{\overline{M_{u}}(v)} - \Pr{Q_{u}(v)}  \right).
\end{align*}
Note the difference between the term $\rtwo(e)=y_{e} \cdot \left( \Pr{\unmatch{u}{e}} - \Pr{\free{u}{e}}  \right)$ and the batched one $\rtwo(vu)$: the former is for each individual edge which has negligibly small weight ($y_{vu}$ is close to $0$), while the latter applies to all parallel edges in the batch $(vu)$ (thus $y_{vu}$ is not necessarily small). 

To sum up, by Lemmas~\ref{lem:prob_v_match}, \ref{lem:rone} and \ref{lem:rtwo}, we have that
\[
\Pr{M_v} \ge \sum_{u} \left( \rone(vu) + e^{-c} \cdot \rtwo(vu) \right).
\] 
We sum these inequalities over $v \in L$, and get
\[
\alg = \sum_{v \in L} \Pr{M_v} \ge \sum_{v \in L} \sum_{u \in R} \left( \rone(vu) + e^{-c} \cdot \rtwo(vu) \right).
\]

To conclude the proof of Lemma~\ref{lem:greedy_analysis}, we show the following subdivision lemma. Note that coefficient at $\rtwo(e)$ gets worse than the coefficient at $\rtwo(uv)$. The reason is that $\rtwo(e)$ is not necessarily monotone and may be larger than $\rtwo(uv)$ (we may think of $\rtwo(uv)$ as $\rtwo(e')$ for the very first edge $e'\in E_{uv}$ in $\sigma$).
\begin{lemma}
\label{lem:subdivision_II}
For all $u \in L, v \in R$,
\begin{equation}
\label{eq:subdivision_ll}
	\rone(vu) + e^{-c} \cdot \rtwo(vu) \ge \sum_{e \in E_{vu}} \left( \rone(e) + e^{-c-ce^{-c}} \cdot \rtwo(e) \right).
\end{equation}
\end{lemma}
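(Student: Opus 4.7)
The plan is to fix the pair $(u, v)$, express $\text{LHS} - \text{RHS}$ as a small algebraic combination of a few scalars determined by the algorithm's state, minimize over the algorithmically feasible region, and reduce to a scalar inequality in $w = w_{vu}$ that can be verified analytically near $w=0$ and numerically elsewhere. By Lemma~\ref{lem:batched_order} we may assume $E_{vu}$ arrives as a consecutive batch; under the infinitesimal-edge assumption, parameterize position within the batch by $s = \sum_{j<i} x_{e_j} \in [0, x_{vu}]$. Writing $y = 1 - e^{-w}$, $q = \Pr{Q_u(v)}$, $A = \overline{M_u}(v)$, $B$ the event ``$v$ matched at start of batch'', $\alpha = \Pr{A\cap B}$, $\beta = \Pr{A\cap\overline{B}}$, the independence of batch realizations from pre-batch events gives $\Pr{Q_u(e_i)} = q e^{-cs}$ and $\Pr{\overline{M_u}(e_i)} = \alpha + \beta e^{-cs}$. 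Integrating, $\sum_e \rone(e) = \int_0^{x_{vu}} (1 - \exp(-cq e^{-cs}))\,ds$ and $\sum_e \rtwo(e) = w\alpha + (\beta - q)y$.

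Grouping terms,
\[
\text{LHS} - \text{RHS} = A_\alpha\,\alpha + A_\beta\,(\beta - q) + \bigl[\rone(vu) - \textstyle\sum_e \rone(e)\bigr],
\]
with $A_\alpha = e^{-c}(e^w - 1) - e^{-c-ce^{-c}} w$ and $A_\beta = e^{-c}(e^w - 1) - e^{-c-ce^{-c}} y$; both vanish at $w = 0$ and a short derivative check gives $0 \le A_\alpha \le A_\beta$ on $w \ge 0$, with $A_\beta - A_\alpha = e^{-c-ce^{-c}}(w-y)$. Applying Lemma~\ref{lem:conditional_unmatch} to the batched edge $(vu)$, together with $Q_v(v) \subseteq \overline{M_v}(v)$, yields $\beta \ge \Pr{A\cap Q_v(v)} \ge (\alpha+\beta)\,p$, where $p := \Pr{Q_v(v)} \ge e^{-(c-w)}$ by $c$-regularity of $v$.

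Under the constraints $\alpha, \beta \ge 0$, $\alpha+\beta \ge q$, $\beta \ge (\alpha+\beta) p$, the minimum of the linear functional $A_\alpha \alpha + A_\beta(\beta - q)$ equals $-(A_\beta - A_\alpha)\,q\,(1-p)$, attained at $\alpha = q(1-p)$, $\beta = qp$; taking infimum over $p \ge e^{-(c-w)}$ gives $-(A_\beta - A_\alpha)\,q\,(1 - e^{-(c-w)})$. The claim therefore reduces to the scalar inequality
\[
\rone(vu) - \textstyle\sum_e \rone(e) \;\ge\; e^{-c-ce^{-c}}(w-y)\,q\,(1 - e^{-(c-w)})
\]
for $c = 2$, $w \in (0, c]$, and $q \in [e^{-(c-w)}, 1]$ (the lower bound on $q$ also coming from $c$-regularity of $u$). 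The left side is convex in $q$ with value and derivative $0$ at $q = 0$, while the right is linear with positive slope, so their difference is a convex function of $q$ starting at $0$, initially decreasing and eventually increasing; it therefore suffices to verify the inequality on the lower boundary $q = e^{-(c-w)}$.

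A Taylor expansion at $w = 0$ along $q = e^{-(c-w)}$ produces $w^2$-coefficients $\tfrac{1}{2}e^{-ce^{-c} - 2c}$ on the left and $\tfrac{1}{2}e^{-2c - ce^{-c}}(1 - e^{-c})$ on the right, whose ratio is $1/(1 - e^{-c}) > 1$, establishing strict inequality for small $w$. The main technical obstacle is extending this estimate across the whole interval $w \in (0, c]$ at $c = 2$. I would discharge this exactly as Lemma~\ref{lem:opt_general} is handled: verify the reduced scalar inequality by computer-assisted evaluation on a sufficiently fine $w$-grid, with explicit bounds on the derivatives of the involved functions to certify correctness between grid points. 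Chaining back the reductions yields $\text{LHS} \ge \text{RHS}$ for every $(u, v)$ and completes the proof.
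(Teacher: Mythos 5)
Your route is genuinely different from the paper's and, up to the final reduction, tighter. The paper bounds the $\rone$ difference via Jensen and the $\rtwo$ difference by the crude monotonicity $\Pr{\overline{M_u}(v)} \ge \Pr{\overline{M_u}(e_j)}$, arriving at Claim~\ref{clm:h3}, a two-variable inequality in $(x,q)$ checked numerically. You instead (i) compute the $\rone$ difference exactly in the infinitesimal limit, and (ii) introduce the decomposition $\Pr{\overline{M_u}(e_i)} = \alpha + \beta e^{-cs}$ with $\alpha = \Pr{A\cap B}$, $\beta = \Pr{A\cap\overline{B}}$, then drive $\beta$ down using Lemma~\ref{lem:conditional_unmatch} applied to $\overline{M_u}(v)$ and $Q_v(v)$, together with $c$-regularity of $v$. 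This is a clean, correct extra piece of structure that the paper never exploits, and it yields an upper bound on the $\rtwo$ loss that carries the additional factor $(1 - e^{-(c-w)})$, strictly improving on the paper's $(w-y)q$. Your LP over $(\alpha,\beta)$ and its optimum $-(A_\beta - A_\alpha)\,q\,(1-p)$ are also verified correctly. So far this is a cleaner and quantitatively stronger decomposition.

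The gap is the $q$-reduction. You assert that $\int_0^x e^{-cqe^{-cs}}\dd s - x(1-qy)^{1/x}$ is convex in $q$, but it is not: the first integral is convex in $q$, while $-x(1-qy)^{1/x}$ has second derivative $-y^2(1/x-1)(1-qy)^{1/x-2}\le 0$, so the sum is convex plus concave. A direct computation shows the sum is in fact \emph{concave} on a large portion of the feasible region. For $c=2$, expanding to second order as $x\to 0$ gives a $q$-second derivative proportional to $8q^2-16q+4$, which is negative for $q\in(1-\tfrac1{\sqrt2},\,1]$, and a direct check at $x=0.1$, $q=0.7$ gives a value near $-0.008$. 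Moreover, even if the difference were convex with $d(0)=0$ and $d'(0)<0$, checking the lower boundary $q=e^{-(c-w)}$ would not establish nonnegativity on $[e^{-(c-w)},1]$: a convex function can attain its minimum strictly in the interior (e.g.\ $d(q)=(q-\tfrac12)^2-\varepsilon$), so the restriction to $q=e^{-(c-w)}$ discards exactly the values where the inequality is most at risk. The fix is straightforward: stop before the $q$-reduction and verify your (already tighter) two-variable inequality in $(w,q)$ numerically over the feasible region, exactly as the paper does with its weaker Claim~\ref{clm:h3}. With that replacement your argument is complete and in fact proves a sharper intermediate bound than the one in the paper.
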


\begin{proof}
Suppose there are $k$ parallel edges $\{e_1,\cdots,e_k\}$ in $E_{vu}$. We have that $\sum_{j} x_{e_j} = x_{vu}$,  $c\cdot x_{e_j}=w_{e_j}$, $\prod_{j=1}^{k} (1-y_{e_j}) = (1-y_{vu})$, and $y_{e_j}=1-e^{w_{e_j}}$.
We first consider the difference between the $\rone$ terms.
\begin{align}
\rone(vu) - \sum_{j=1}^k \rone(e_j) = & \sum_{j} x_{e_j} \cdot \exp \left( - \frac{q_u(e_j)\cdot y_{e_j}}{x_{e_j}} \right)  - x_{vu} \cdot \left( 1-q_{u}(v) \cdot y_{vu} \right)^{\frac{1}{x_{vu}}} \notag \\
\ge & x_{vu} \cdot \left( \exp\left( - \frac{\sum_{j}q_{u}(e_j)\cdot y_{e_j}}{x_{vu}} \right) - \left( 1-q_{u}(v) \cdot y_{vu} \right)^{\frac{1}{x_{vu}}} \right) \tag{by Jensen's inequality}\\
= & x_{vu} \cdot \left( \exp\left( - \frac{q_u(v) \cdot y_{vu}}{x_{vu}} \right) - \left( 1-q_{u}(v) \cdot y_{vu} \right)^{\frac{1}{x_{vu}}} \right). \label{eq:rone_increment}
\end{align}
Next we consider the difference between the $\rtwo$ terms.
Obviously, the probability that $u$ is unmatched decreases with time. Thus $\Pr{\overline{M_{u}}(v)} \ge \Pr{\overline{M_{u}}(e_j)}$ for all $j$. Next, we observe that
\[
\frac{y_{vu}}{1-y_{vu}} = \frac{1-\prod_{j}(1-y_{e_j})}{\prod_{j}(1-y_{e_j})} 
=\frac{1}{\prod_j (1-y_{e_j})}-1=\prod_j (1+y_{e_j}+y_{e_j}^2+\ldots)-1
\ge 
\sum_{j} y_{e_j}.
\]
Thus,
\begin{align}
\sum_{j}\rtwo(e_j) -&  \rtwo(vu) = \sum_{j} y_{e_j} \cdot \left( \Pr{\overline{M_{u}}(e_j)} - \Pr{Q_{u}(e_j)} \right) - \frac{y_{vu}}{1-y_{vu}} \cdot \left( \Pr{\overline{M_{u}}(v)} - \Pr{Q_{u}(v)} \right) \notag\\
\le & \sum_{j} y_{e_j} \cdot \left( \Pr{\overline{M_{u}}(v)} - \Pr{Q_{u}(e_j)} \right) - \sum_{j} y_{e_j} \cdot \left( \Pr{\overline{M_{u}}(v)} - \Pr{Q_{u}(v)} \right) \notag \\
= & \sum_{j} y_{e_j} \cdot \left(\Pr{Q_u(v)} - \Pr{Q_u(e_j)} \right) =\sum_{j} y_{e_j} \cdot (1 - \prod_{i<j} (1-y_{e_i}))\cdot q_u(v) 
\notag \\
= &  \sum_{j} (1-e^{-w_{e_j}}) \cdot (1 - e^{-\sum_{i<j} w_{e_i}}) \cdot q_u(v)
=\sum_{j} (1-e^{-w_{e_j}} - e^{-\sum_{i<j} w_{e_i}}+ e^{-\sum_{i\le j} w_{e_i}}) \cdot q_u(v)
 \notag \\
\le & \sum_{j} \left( w_{e_j} - e^{-\sum_{i<j} w_{e_i}}+ e^{-\sum_{i \le j} w_{e_i}} \right) \cdot q_u(v) \tag{$1-e^{-z} \le z$}\\
= & \left( w_{vu} - 1 + e^{-w_{vu}} \right) \cdot q_u(v) = \left( - \ln(1-y_{vu}) - y_{vu} \right) \cdot q_u(v).
\label{eq:rtwo_decrement}
\end{align}
For notation simplicity, in the rest of the proof, we use $x,y,q$ to denote $x_{vu},y_{vu},q_u(v)$ respectively. Then $y = 1-e^{-cx}$ and $\frac{e^{-c}}{1-y} = e^{-c+cx} \le q \le 1$ (recall that $q_u(v)$ is the probability that $u$ have no realized edges before the batch $(uv)$ arrives and log-normalized degree of $v$ is $c$). 

We will prove a stronger statement than~\eqref{eq:subdivision_ll}. Namely, that
\begin{equation}
\label{eq:subdivision_monotone}
\rone(vu) + e^{-c}\cdot \rtwo(vu) \ge \rone(vu) + e^{-c-ce^{-c}} \cdot \rtwo(vu) \ge \sum_{e \in E_{vu}} \left( \rone(e) + e^{-c-ce^{-c}} \cdot \rtwo(e) \right).
\end{equation}

We measure our gains and losses from subdivision of $(uv)$ into $\{e_j\}_{j=1}^k$ in~\eqref{eq:subdivision_monotone}. Our gain is at least $x \cdot \left( e^{-\frac{qy}{x}} - (1-qy)^{\frac{1}{x}} \right)$ by~\eqref{eq:rone_increment}, while our loss is at most $e^{-c-c e^{-c}} \cdot (-\ln(1-y)-y) \cdot q$ by \eqref{eq:rtwo_decrement}. 

By computer-assisted proof (see Appendix~\ref{sec:appendix}), we have the following mathematical fact. The inequality is tight when $x \to 0$.

\begin{claim}
When $c = 2$, $x \in (0, 1]$, $q \in [e^{-c + cx}, 1]$, and $y = 1 - e^{-cx}$,
\[
h_3(x, q) \eqdef x \cdot \left( e^{-\frac{qy}{x}} - (1-qy)^{\frac{1}{x}} \right) - e^{-c-c e^{-c}} \cdot (-\ln(1-y)-y) \cdot q \geq 0.
\]
\label{clm:h3}
\end{claim}
It means that the gains from subdivision are greater than the losses in~\eqref{eq:subdivision_monotone}. Therefore, \eqref{eq:subdivision_ll} is true as well.
\end{proof}

\subsection{Proof of Lemma~\ref{lem:rtwo_opt}}
\label{subsec:opt_rtwo}
Recall that all individual edges $e$ are infinitesimally small,  $\Pr{Q_{v}(e_0)}=\exp\InParentheses{-\sum_{e<e_0} w_e}$,  $\Pr{\overline{M_{u}}(e)}-\Pr{Q_{u}(e)}\ge 0$ at any time of the algorithm's execution, and at the end it is $\Pr{\overline{M_{u}}}-\Pr{Q_{u}}=\delta_v$. We note that $\Pr{\overline{M_{u}}}$ is a decreasing function over time. Moreover, we can establish the following lower bound on the rate of its decay. Let $e_i=(uv)$ and $e_{i+1}$ be two edges incident to $v$ ($e_{i+1}$ may be or may be not parallel to $e_i$) arriving one after another in the order $\sigma$. Then,
\begin{multline}
	\label{eq:exponential_decay_matching}
	\Pr{\overline{M_{v}}(e_i)}-\Pr{\overline{M_{v}}(e_{i+1})}\ge \Pr{\overline{M_{v}}(e_i)\cap Q_{u}(e_i)\cap \exists e_i}
	=\Pr{\overline{M_{v}}(e_i)\cap Q_{u}(e_i)} \cdot \Pr{ \exists e_i}\\ 
	=
	\Pr{\overline{M_{v}}(e_i) ~\vert~ Q_{u}(e_i)} \cdot \Pr{Q_u(e_i)}\cdot y_{e_i}
	\ge\Pr{\overline{M_{v}}(e_i)}\cdot e^{-c}\cdot w_{e_i},\quad\quad
\end{multline}
where the last inequality follows from the Lemma~\ref{lem:conditional_unmatch} and the fact that 
$\Pr{Q_u(e_i)}\ge e^{-c}$ as $\sum_{e\in E_u}w_e = c$. The bound \eqref{eq:exponential_decay_matching} gives 
an upper bound on the derivative of the function $m(t)\eqdef\Pr{\overline{M_{u}}(t)}$, where parameter $t$ represents the degree of $u$ within the set of the currently arrived edges: $\frac{d}{dt}m(t)\le - m(t)\cdot e^{-c}$, or equivalently 
$-\frac{d}{dt}\ln m(t)\ge e^{-c}$. Given that $m(c)=e^{-c}+\delta_u$, we get that 
\[
m(t)\ge\InParentheses{e^{-c}+\delta_u}e^{(c-t)e^{-c}}\eqdef h(t).
\]

Having this bound on $m(t)$, explicit formula for the corresponding $\Pr{Q_{u}(t)}=e^{-t},$ and the fact that $m(t)\ge \Pr{Q_{u}(t)}$, we can estimate 
\begin{multline}
\label{eq:second_term_exact_parameter}
\sum_{e\in E_u}  y_{e} \cdot \InParentheses{\Pr{\overline{M_{u}}(e)} - \Pr{Q_{u}(e)}}\ge \int\limits_{t: h(t)\ge e^{-t}} h(t)-e^{-t} \dd t 
=\int\limits_{t_0: h(t_0)=e^{-t_0}}^{c} h(t)-e^{-t} \dd t\\
=\int\limits_{t_0}^{c} \InParentheses{e^{-c}+\delta_u}e^{(c-t)e^{-c}}\dd t-\int\limits_{t_0}^{c} e^{-t} \dd t
=\InParentheses{e^{-c}+\delta_u}e^{c}\InParentheses{e^{(c-t_0) e^{-c}} - 1} - e^{-t_0}+e^{-c}\\
=e^{c}h(t_0)-\InParentheses{e^{-c}+\delta_u}e^{c} -e^{-t_0}+e^{-c}
=e^{c}\cdot e^{-t_0}-\InParentheses{e^{-c}+\delta_u}e^{c} -e^{-t_0}+e^{-c},
\end{multline}
where $t_0$ satisfies $h(t_0)=\InParentheses{e^{-c}+\delta_u}e^{(c-t_0)e^{-c}}=e^{-t_0}$. Solving for $t_0$ we get
\[
e^{-t_0}=\InParentheses{e^{-c}+\delta_u}^{\frac{1}{1-e^{-c}}} \cdot e^{\frac{c e^{-c}}{1-e^{-c}}}.
\]
We plug this formula for $e^{-t_0}$ into \eqref{eq:second_term_exact_parameter} and get
\begin{multline}
\label{eq:second_term_exact}
\sum_{e\in E_v}  y_{e} \cdot \InParentheses{\Pr{\overline{M_{v}}(e)} - \Pr{Q_{v}(e)}}\ge 
e^{-t_0}(e^{c}-1)-e^c\InParentheses{e^{-c}+\delta_u}+e^{-c}\\
= \InParentheses{e^{-c}+\delta_u}^{\frac{1}{1-e^{-c}}} \cdot e^{\frac{c e^{-c}}{1-e^{-c}}} \cdot 
(e^{c}-1)-e^c\InParentheses{e^{-c}+\delta_u}+e^{-c}
\ge 1.98 \cdot \delta_u^2,
\end{multline} 
where the last inequality
\[
h_4(\delta_u) \eqdef \InParentheses{e^{-c}+\delta_u}^{\frac{1}{1-e^{-c}}} \cdot e^{\frac{c e^{-c}}{1-e^{-c}}} \cdot (e^{c}-1)-e^c\InParentheses{e^{-c}+\delta_u}+e^{-c} - 1.98 \cdot \delta_u^2 \geq 0
\]
is verified by numerical methods (see Appendix~\ref{sec:appendix}) when $c=2$ and $\delta_u \le 1-e^{-2}$.


\section{Problem Hardness}
\label{sec:hardness}

In this section, we present an upper bound of $\frac{2}{3} \approx 0.667$ for all online algorithms. 
Consider the graph shown in Figure~\ref{fig:upper}.
We use $L_1 = \{u_i\}_{i=1}^n, R_1 = \{v_j\}_{j=1}^{n}, L_2 = \{u_i'\}_{i=1}^n$ and $R_2 = \{v_j'\}_{j=1}^n$ to denote the vertices in the graph.
The edges are defined as the following:
\begin{enumerate}
	\item For each pair of $(u,v) \in L_1 \times R_1$, let there be an edge $(u,v)$ with existence probability $1$. We call them type-1 edges (red solid edges).
	\item For each $i \in [n]$, let there be an edge $(u_i, v_i')$ with existence probability $\frac{1}{2}$. We call them type-2 edges (blue dashed edges).
	\item For each $i \in [n]$, let there be an edge $(u_i', v_i)$ with existence probability $\frac{1}{2}$. We call them type-3 edges (green dashed edges).
\end{enumerate}
Let the type-1 edges arrive first and then type-2 and type-3 edges.
\begin{figure}
	\centering
	\includegraphics[width=0.8\linewidth]{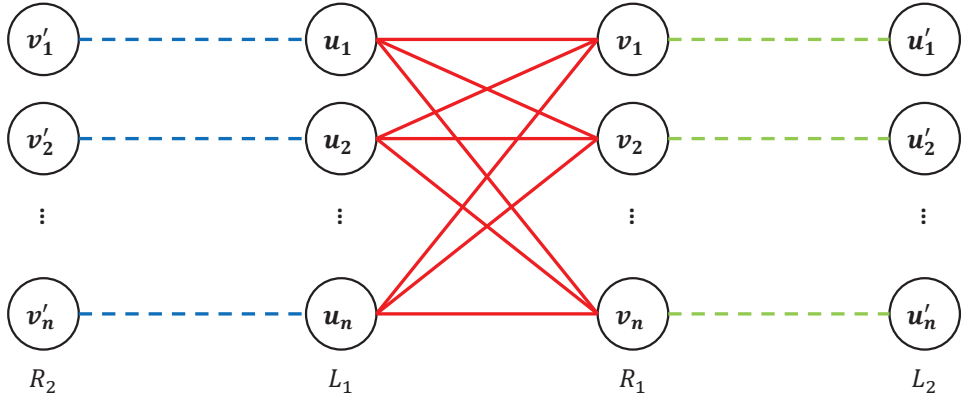}
	\caption{Hard instance for any algorithm}
	\label{fig:upper}
\end{figure}

\begin{theorem}
No algorithm is better than $\frac{2}{3}$-competitive.
\end{theorem}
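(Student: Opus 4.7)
The plan is to estimate $\E[\opt]$ exactly and give a matching upper bound on $\E[\alg]$ for the instance in Figure~\ref{fig:upper}, then let $n\to\infty$.

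For the benchmark, let $S=\{i:(u_i,v_i')\text{ is realized}\}$ and $T=\{i:(u_i',v_i)\text{ is realized}\}$. An offline solution can take every realized type-2 and type-3 edge (contributing $|S|+|T|$), which occupies the vertices $\{u_i:i\in S\}\subseteq L_1$ and $\{v_i:i\in T\}\subseteq R_1$. Because type-1 edges form a complete bipartite graph on $L_1\times R_1$, the remaining $n-|S|$ vertices in $L_1$ can then be matched to $n-|T|$ vertices in $R_1$, contributing $\min(n-|S|,n-|T|)$ further edges, and this is clearly optimal. Therefore $\opt=|S|+|T|+\min(n-|S|,n-|T|)=n+\min(|S|,|T|)$. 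Since $|S|,|T|$ are independent $\mathrm{Bin}(n,1/2)$, a standard concentration argument gives $\E[\min(|S|,|T|)]=n/2-O(\sqrt n)$, hence $\E[\opt]\ge 3n/2-O(\sqrt n)$.

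For the online side, I would condition on the state after the type-1 phase. Type-1 edges are deterministic, so the algorithm arrives at a (possibly randomized) matching $M_1$ between some $k=|M_1|$ vertices of $L_1$ and $k$ vertices of $R_1$. Now observe that for each $i$ with $u_i\notin M_1$, the only way $u_i$ can contribute in the type-2/type-3 phase is via $(u_i,v_i')$, which requires that edge to be realized (probability $1/2$); note $v_i'$ has no other incident edge, so no other interference is possible. Hence conditional on $M_1$, the expected number of type-2 edges accepted is at most $(n-k)/2$, and symmetrically the contribution from type-3 is at most $(n-k)/2$. Combining,
\begin{equation*}
\E[\alg\mid M_1]\;\le\; k+\tfrac{n-k}{2}+\tfrac{n-k}{2}\;=\;n,
\end{equation*}
independent of $k$, so $\E[\alg]\le n$ for every online policy. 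Therefore the competitive ratio is at most $n/(3n/2-O(\sqrt n))\to 2/3$.

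The main conceptual step is recognizing that because type-2 and type-3 edges use \emph{disjoint} sides of the bipartition (and each has a unique endpoint on the $R_2$ or $L_2$ side), the only contention they create with the type-1 matching is through the $k$ occupied vertices on each side of $L_1\cup R_1$. This makes the conditional bound split additively and yields the clean trade-off: every type-1 edge the algorithm commits to ``costs'' exactly one expected unit in the subsequent phases. The rest of the argument is routine, amounting only to the $\Theta(\sqrt n)$ concentration estimate for $\min(|S|,|T|)$.
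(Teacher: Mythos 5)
Your proposal is correct and follows essentially the same route as the paper: bound the online algorithm by $n$ via the trade-off between the $k$ committed type-1 edges and the $(n-k)/2$ expected contributions from each of the type-2 and type-3 phases, and show $\E[\opt]\ge(1.5-o(1))n$ by concentration of the binomial counts of realized dashed edges. Your version merely makes the conditioning on the post-type-1 matching and the $\opt=n+\min(|S|,|T|)$ computation more explicit than the paper does.
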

\begin{proof}
Note that there is no randomness for type-1 edges. If an algorithm matches $k$ of them, there will be $\frac{n - k}{2}$ possible type-2 edges and $\frac{n - k}{2}$ type-3 edges in expectation. Thus any online algorithm matches no more than $k + \frac{n - k}{2} + \frac{n - k}{2} = n$ in expectation.

On the other hand, with high probability, there are at least $(0.5-o(1)) \cdot n$ realized type-2 edges and at least $(0.5-o(1)) \cdot n$ realized type-3 edges. In this case, the prophet can match $(0.5-o(1)) \cdot n$ type-2 and type-3 edges respectively and then $0.5 \cdot n$ type-1 edges. In total, the prophet matches $(1.5 - o(1)) \cdot n$ edges with high probability. That is, $\opt \ge (1.5-o(1)) \cdot n$ when $n \to \infty$. 
\end{proof}

\bibliographystyle{acm}
\bibliography{ref}

\appendix
\section{Computer-Assisted Proof Details}
\label{sec:appendix}
In this appendix, we provide plots for several functions whose lower bounds are from numerical methods. The MATLAB code is available at \url{http://users.cs.duke.edu/~knwang/OSMWEA.zip}. In Figure~\ref{fig:h1}, we show $h_1(c) > 0.532$ for $c = 2$. In Figure~\ref{fig:h2}, We assume $h_2$ takes value of $0.5$ outside its domain to show a clear separation ($h_2 > 0.503$ inside its domain). In Figures~\ref{fig:h3}~and~\ref{fig:h4}, $h_3$ and $h_4$ are shown non-negative as we claimed.

\begin{figure}[H]
\centering
\subcaptionbox{$h_1(c)$ in Theorem~\ref{thm:regular}\label{fig:h1}}
{\makebox[0.5\linewidth][c]{{\includegraphics[scale=0.15]{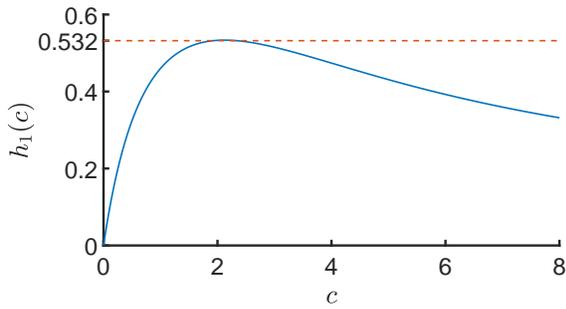}}}}%
\subcaptionbox{$h_2(s, t)$ in the Proof of Lemma~\ref{lem:opt_general}\label{fig:h2}}
{\makebox[0.5\linewidth][c]{{\includegraphics[scale=0.15]{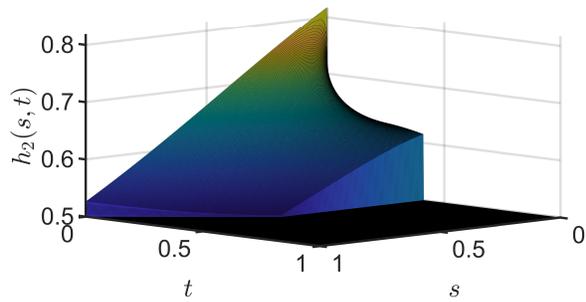}}}}

\subcaptionbox{$h_3(x, q)$ in Claim~\ref{clm:h3}\label{fig:h3}}
{\makebox[0.5\linewidth][c]{{\includegraphics[scale=0.15]{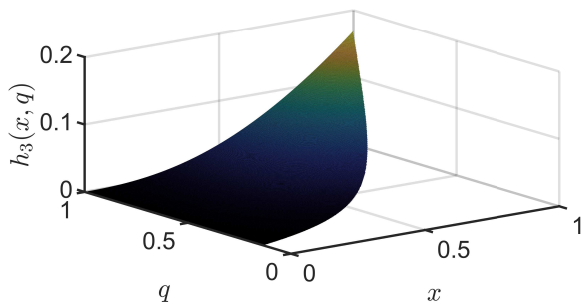}}}}%
\subcaptionbox{$h_4(\delta_u)$ in the Proof of Lemma~\ref{lem:rtwo_opt}\label{fig:h4}}
{\makebox[0.5\linewidth][c]{{\includegraphics[scale=0.15]{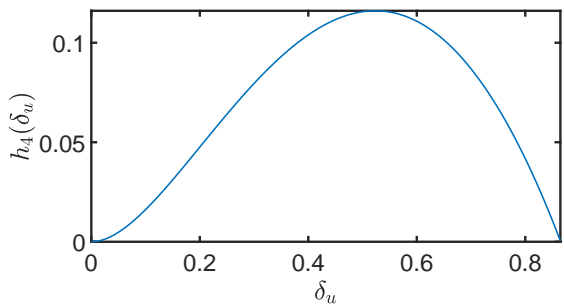}}}}
\caption{Plots of Several Functions}
\end{figure}

\end{document}